\newtheorem{theo}{Theorem}[section]
\newtheorem{lemma}{Lemma}[section]
\newtheorem{cor}{Corollary}[section]\newtheorem{prop}{Proposition}[section]
\theoremstyle{definition}
\newtheorem{remark}{Remark}[section]
\title{The Kepler Problem with Anisotropic Perturbations}
\author{}\def \beq{ \begin{equation} }
\def \eeq{\end{equation}}
\def \r#1{$^{#1}$}
\def \d{{\rm d}}
\def \pd{\partial}
\def\.#1{\dot #1}
\def \th{\theta}
\def \Th{\Theta}
\def \ep{\epsilon}
\def \r#1{$^{#1}$}
\def \( {\big( }
\def \) {\big) }
\def \intR {\int_{-\infty}^{+\infty}}
\def \grad {\nabla}
\def \ov{\over}
\def \qq {\qquad}
\def \bar{\overline}
\def \1{{\bf I}}
\def \2{{\bf II}}
\def\.#1{\dot #1}
\date{}
\begin{document}
\DeclareGraphicsRule{.eps.gz}{eps}{.eps.bb}{#1}
\DeclareGraphicsRule{.ps.gz}{eps}{.ps.bb}{#1}
\maketitle
\author{\begin{center}Florin Diacu\\
\end{center}
\begin{center}
{\footnotesize 
Department of Mathematics and Statistics\\
University of Victoria\\
P.O.~Box 3045 STN CSC\\
Victoria, BC, Canada, V8W 3P4\\
diacu@math.uvic.ca\\
}\end{center}

\begin{center}
Ernesto P\'erez-Chavela\\
\end{center}
\begin{center}
{\footnotesize
Departamento de Matem\'aticas\\
Universidad Aut\'onoma Metropolitana-Iztapalapa\\
Apdo.\ 55534, M\'exico, D.F., M\'exico\\
epc@xanum.uam.mx\\
}\end{center}

\begin{center}
Manuele Santoprete\\
\end{center}
\begin{center}
{\footnotesize Department of Mathematics\\
University of California, Irvine\\
Irvine CA, 92697 USA\\
msantopr@math.uci.edu\\
}\end{center}
}

\begin{abstract}
We study a 2-body problem given by the sum of the Newtonian
potential and an anisotropic perturbation that is a homogeneous
function of degree $-\beta$, $\beta\ge 2$. For $\beta>2$, the sets
of initial conditions leading to collisions/ejections and the one
leading to escapes/captures have positive measure. For $\beta>2$
and $\beta\ne 3$, the flow on the zero-energy manifold is chaotic.
For $\beta=2$, a case we prove integrable, the infinity manifold
of the zero-energy level is a disconnected set, which has
heteroclinic connections with the collision manifold.
\end{abstract}

\vskip 0.5truecm
\hspace{25pt}  PACS(2003): 05.45.-a, 45.50.Jf, 45.50.Pk

\vfill\eject



\section*{\large\bf I. Introduction}

In the past three centuries, celestial mechanics has stimulated the
development of many branches of mathematics (see e.g. Refs. 11,13). This trend continues, and even its most
basic question, known as the {\it 2-body problem} or the {\it Kepler
problem}, still attracts the vivid interest of mathematicians and
physicists, both in its classical form (see Ref. 1) and in newer
versions.

Among the latter are the problems raised by {\it quasihomogeneous} potentials,
given by the sum of homogeneous functions, and problems set in {\it aniso\-tropic} spaces, for which the interaction law is different in each direction of the space. For quasihomogeneous problems the terminology and the first qualitative results were introduced in the mid 1990s;\r{9,12,21} this potential unifies several dynamical laws, including those of Newton, Coulomb,
Manev, Schwarzschild, Lennard-Jones, Birkhoff and others. The anisotropic
case is more related to physics, and was initiated by Martin Gutzwiller in
the 1970s\r{17,18}  for the quantization
of classical ergodic systems. Among Gutzwiller's goals was also that of
finding connections between classical and quantum mechanics. A combination
of the quasihomogeneous and anisotropic aspects shows up in the anisotropic Manev problem, whose dynamics contains classical, quantum and relativistic features (see Refs. 6,10,14,15).

In the present paper we consider a new version of the Kepler problem, which combines two of the above characteristics: isotropy and anisotropy. The potential (see formula (\ref{potential})) is the sum of the classical
Keplerian potential and an anisotropic perturbation, the latter being an homogeneous function of degree $-\beta$, $\beta\ge 2$. This is the first analysis of a {\it quasihomogeneous} potential that mixes isotropic and
anisotropic components. For previously studied problems, all terms have
been either isotropic or anisotropic. As we will see, this case has some
surprising dynamical properties, often very different from the ones that characterize potentials whose terms are not mixed.

Such mixed potentials can be used to understand the dynamics of
satellites around oblate planets or the motion of stars around black holes. Here, however, we are not interested in applications. Our endeavours are restricted to mathematical results.

In Sec. II we introduce the basic notations, the equations of motion,
and put into the evidence the symmetries of the problem.
In Sec. III we begin the study of the case $\beta>2$, define the collision manifold, which is an essential qualitative tool, and perform a geometric
study of the flow in the neighbourhood of collisions. We classify all collision-ejection orbits and prove that the set of initial conditions
leading to them has positive measure. We achieve this while studying the flow on and near the collision manifold in terms of McGehee-type coordinates.

In Sec. IV we investigate the existence of heteroclinic orbits on the collision manifold for potentials with $\mu>1$ and $\beta=2+{2\over 1+2k}$ or $\beta=2+{1\over 1+k}$. The main result of this section is that for an
open and dense set of $\mu$-values, saddle-saddle connections do not exist
on the collision manifold. Section V deals with capture and escape orbits
in the zero-energy case. We show that the infinity manifold has
two circles of normally hyperbolic equilibria, one attractive and
one repelling. This proves the existence of infinitely many
capture and escape orbits.

In Sec. VI we consider the case $\beta=2$, which we show to be
integrable. Apparently this is quite surprising since the anisotropic
Manev problem, which resembles this case except for the anisotropy of
the Newtonian term, is nonintegrable (see Refs. 3,15). But as we will show, the surprise element vanishes once we look at the problem in the larger context of the Hamilton-Jacobi theory. In Sec. VII we study the flow on and near
the collision manifold and see that its qualitative behaviour is
similar to that of the anisotropic Manev problem.

In Sec. VIII we prove the existence of heteroclinic orbits connecting
the collision and infinity manifolds in the zero-energy case for $\beta=2$.
The infinity manifold is formed by two disjoint circles of equilibria,
which seems very unusual. In fact this is the first case we have ever encountered in which the infinity (or collision) manifold is a disconnected set. A complete understanding of the topological structure of the phase space
when all parameters are varied will probably shed more light in this matter.
But a study in this direction, though interesting and worthy of further
investigations, is beyond our present scope.

In Sec. IX we consider a perturbative approach of the problem. The perturbation function of the Hamiltonian is a homogeneous function of degree $-\beta$ with $\beta>3/2$. We end the paper with Section X, in which we apply the Melnikov method to show that for every $\beta\ne 2, 3$, the flow on
the zero-energy manifold is chaotic. This also proves the nonintegrability
of the problem for $\beta\ne 2, 3$.

\section*{\large\bf II. Equations of Motion and Symmetries}

Consider the Hamiltonian
\begin{equation}\label{hamf}
   H_{\beta}({\bf q}, {\bf p})\,=\,\frac{1}{2}\,{\bf p}^2 + U_{\beta}({\bf q}),
\end{equation}
where ${\bf q}=(x,y)$ and ${\bf p}=(p_x,p_y)$.
The equations of motion are
\beq
\begin{split}
\dot{\bf q}=&~{\bf p}\\
\dot{\bf p}=&-\grad U_{\beta}({\bf q}),
\end{split}
\label{eqmotion}
\eeq
where $U_{\beta}$ is a potential of the form
\beq
U_{\beta}(x,y) = -\frac{1}{\sqrt{x^2 + y^2}} - \frac{b}{(\mu x^2 + y^2)^{\beta /2}},
\label{potential}
\eeq
with the constants $beta\ge 2$, $\mu\ge 1$ and $b>0$.
The symmetries of (\ref{eqmotion}) are given by the following analytic diffeomorphisms in the extended phase space:
\beq \begin {array}{l} \
\hspace{2pt}      Id \hspace{.6pt} :~(x,y,p_x,p_y,t)\longrightarrow(x,y,p_x,p_y,t),\\\
    S_0~:~(x,y,p_x,p_y,t)\longrightarrow(x,y,-p_x,-p_y,-t),
\\\
S_1~:~(x,y,p_x,p_y,t)\longrightarrow(x,-y,-p_x,p_y,-t),\\\
S_2~:~(x,y,p_x,p_y,t)\longrightarrow(-x,y,p_x,-p_y,-t),\\\
S_3~:~(x,y,p_x,p_y,t)\longrightarrow(-x,-y,-p_x,-p_y,t),\\\
S_4~:~(x,y,p_x,p_y,t)\longrightarrow(-x,y,-p_x,p_y,t),\\\
S_5~:~(x,y,p_x,p_y,t)\longrightarrow(x,-y,p_x,-p_y,t),\\\
S_6~:~(x,y,p_x,p_y,t)\longrightarrow(-x,-y,p_x,p_y,-t),
\label{symmetries}
\end {array}
\eeq where $Id$ is the identity. These diffeomorphisms form a group that is isomorphic to ${\mathbb Z}_2\times {\mathbb Z}_2\times \mathbb{Z}_2$ (see Refs. 14,22 for more details).
Invariance under these symmetries implies that if  $\gamma(t)$  is a
solution of (\ref{eqmotion}), then also $S_i(\gamma(t))$ is solution
for $i \in \{0,1,2,3,4,5,6\}$.

\section*{\large\bf III. The Collision Manifold for $\beta>2$}

We will further express the equations of motion in McGehee-type
coordinates,\r{8,20} which are suitable for understanding the motion
near collision.
The transformations are given step by step as follows. Take
\begin{eqnarray}\label{mcgehee}
r &=& x^2 + y^2,  \nonumber \\
\theta &=& \arctan{\frac{y}{x}},  \\
\tilde{v} &=& r\dot{r} = (xp_x + yp_y),  \nonumber \\
\tilde{u} &=& r^{2}\dot{\theta} = (xp_y - yp_x),\nonumber
\end{eqnarray}
rescale $\tilde{v}$ and $\tilde{u}$ by
$$v=r^{\frac{\beta - 2}{2}}\tilde{v}, \quad u=r^{\frac{\beta -
2}{2}}\tilde{u},$$
and then rescale the time variable using the transformation
$$ \frac{dt}{d\tau} = r^{\beta/2 + 1}.$$
The equations of motion take the form
\begin{eqnarray}\label{eqmot}
r^{\prime } &=& rv,  \nonumber \\
v^{\prime } &=&\frac{\beta -2}{2}v^{2} + r^{\beta-1} +
2hr^{\beta}- \frac{b(\beta -2)}{\Delta^{\beta/2}}, \nonumber \\
\theta^{\prime } &=& u,  \\
{u}^{\prime } &=& \frac{\beta -2}{2}uv + \frac{b\beta(\mu
-1)\sin{2\theta}}{2\Delta^{\frac{\beta +2}{2}}}, \nonumber
\end{eqnarray}
where $\Delta= \mu \cos^2{\theta} + \sin^2{\theta}$ and the
prime denotes differentiation with respect to the new independent
time variable $\tau $. For simplicity, we keep the same notation
for the new dependent variables.

In these new coordinates, the energy integral $H_{\beta}=h$ (see equation (\ref{hamf})), takes the form
\begin{equation}\label{eqenerg}
u^2 + v^2 - 2r^{\beta -1} - \frac{2b}{\Delta^{\beta/2}} =
2hr^{\beta}.
\end{equation}

We define the collision manifold (see Figure 1) as
\begin{equation}\label{totcol}
C = \{ (r, v, \theta, u) | \quad  r =0, \quad u^2 + v^2 =
\frac{2b}{\Delta^{\beta/2}} \}.
\end{equation}
Notice that $C$ is homeomorphic to a torus.
The flow on this manifold is given by the system
\begin{eqnarray}\label{eqcol}
v^{\prime } &=& \frac{\beta -2}{2}(-u^{2}), \nonumber \\
\theta^{\prime } &=& u,  \\
{u}^{\prime } &=& \frac{\beta -2}{2}uv + \frac{b\beta(\mu
-1)\sin{2\theta}}{2\Delta^{\frac{\beta +2}{2}}}. \nonumber
\end{eqnarray}

\begin{figure}[t]
\begin{center}
\resizebox{!}{7cm}{\includegraphics{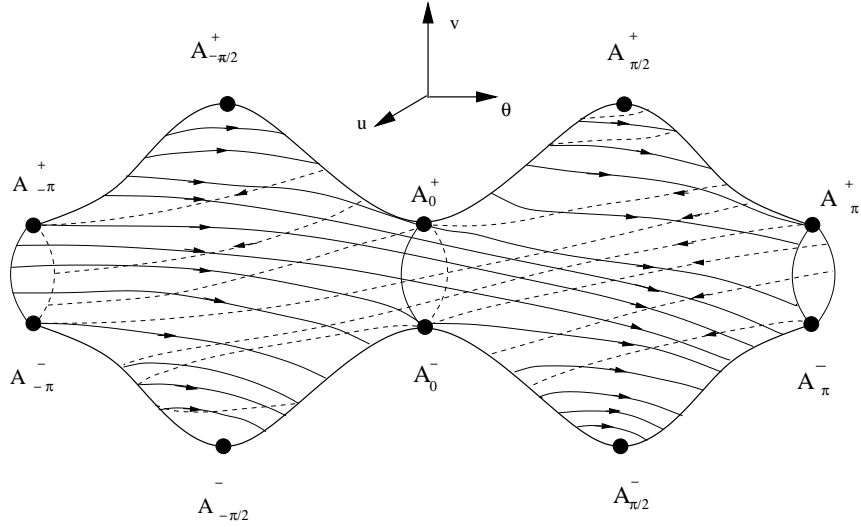}}
\end{center}
\caption{The collision manifold $C$}%
\label{collision}
\end{figure}

We can now prove the following results.

\begin{prop} All the equilibrium points of system (\ref{eqmot})
lie on the collision manifold $C$ and are given by
$$r=0, \quad v=\pm \sqrt{\frac{2b}{\Delta^{\frac{\beta}{2}}}},
\quad \theta= 0,\pi/2,\pi, 3\pi/2, \quad u=0.$$\end{prop}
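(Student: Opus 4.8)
The plan is to set the right-hand sides of all four equations in (\ref{eqmot}) equal to zero and run a short case analysis, invoking the energy relation (\ref{eqenerg}) at the end to remove an extraneous branch. (I take $\mu>1$ here; for $\mu=1$ the second term of the potential is itself isotropic and the rest points instead form two circles, so the statement is implicitly about the genuinely anisotropic case.) First, $\theta'=0$ forces $u=0$; feeding $u=0$ into $u'=0$ then leaves $\frac{b\beta(\mu-1)\sin 2\theta}{2\Delta^{(\beta+2)/2}}=0$, and since $b>0$, $\mu>1$, $\beta>0$ and $\Delta>0$ this forces $\sin 2\theta=0$, i.e. $\theta\in\{0,\pi/2,\pi,3\pi/2\}$. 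Finally $r'=rv=0$ splits the discussion into the cases $r=0$ and $v=0$ (with $r>0$).

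If $r=0$, the equation $v'=0$ collapses to $\frac{\beta-2}{2}\bigl(v^2-2b/\Delta^{\beta/2}\bigr)=0$; since $\beta>2$ we may cancel $\beta-2$ and conclude $v=\pm\sqrt{2b/\Delta^{\beta/2}}$. With $u=0$ this gives $u^2+v^2=2b/\Delta^{\beta/2}$, so the point lies on $C$, and one checks directly that all four right-hand sides vanish there; these are exactly the equilibria listed in the statement. It remains to rule out $v=0$, $r>0$. This is the one delicate point: at the level of the vector field alone this branch is not empty, because $v'=0$ then reduces to $r^{\beta-1}+2hr^\beta=b(\beta-2)/\Delta^{\beta/2}$, which has positive roots for suitable $h$. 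The resolution is that every orbit of (\ref{eqmot}) lies on the corresponding energy level (\ref{eqenerg}); putting $u=v=0$ there gives $2hr^\beta=-2r^{\beta-1}-2b/\Delta^{\beta/2}$, and substituting this back into $v'$ yields $v'=-r^{\beta-1}-b\beta/\Delta^{\beta/2}<0$ for $r>0$, contradicting $v'=0$. Hence no equilibrium has $r>0$, and the proposition follows.

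Thus the proof is mainly bookkeeping — $\theta'=0$ pins $u$, $u'=0$ pins $\theta$, $v'=0$ pins $v$ — and the only step requiring real thought is recognizing that the spurious family $v=0$, $r>0$ survives the differential equations and can be eliminated only by bringing in the energy integral.
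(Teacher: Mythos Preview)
Your proof is correct and follows essentially the same strategy as the paper: set the right-hand sides to zero, split on $r=0$ versus $r>0$, and invoke the energy relation (\ref{eqenerg}) to eliminate the $r>0$ branch. Your version is in fact a bit more complete---you explicitly use $u'=0$ to pin down the four values of $\theta$ (the paper leaves this implicit) and you substitute the energy relation directly into $v'$ to obtain the strictly negative expression $-r^{\beta-1}-b\beta/\Delta^{\beta/2}$, whereas the paper instead records the energy condition and the $v'=0$ condition as two separate equations in $r^{\beta-1}+2hr^\beta$ and observes that they are incompatible.
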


\begin{proof}
It is obvious that the above points are equilibria for the flow defined
by (\ref{eqmot}). To check that there are no equilibria outside the
collision manifold, from the first equation in (\ref{eqmot}) with $r\neq 0$
we see that $v=0$ and from the third equation in (\ref{eqmot}), that $u=0$. Substituting these values in (\ref{eqenerg}), we obtain
\begin{equation}\label{first}
r^{\beta -1}
+ 2hr^{\beta} = \frac{2b}{\Delta^{\beta/2}}.
\end{equation}
Solving the equation $v^{\prime } =0 $ in (\ref{eqmot}), we are led to
\begin{equation}\label{second}
r^{\beta-1} + 2hr^{\beta}= \frac{b(\beta -2)}{\Delta^{\beta/2}}.
\end{equation}
But (\ref{first}) and (\ref{second}) have no common solutions,
therefore there are no equilibria outside the collision manifold.
\end{proof}

\begin{prop} The flow on the collision manifold $C$ is {\it gradient-like} (i.e.\ increasing) with respect to the $(-v)$-coordinate.
\end{prop}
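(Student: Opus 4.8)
The plan is to exhibit $-v$ as a Lyapunov-type function for the flow (\ref{eqcol}) on $C$. First I would simply differentiate along a solution of (\ref{eqcol}): the first equation gives
\[
\frac{d}{d\tau}(-v)=\frac{\beta-2}{2}\,u^{2}.
\]
Since we are in the regime $\beta>2$, the coefficient $(\beta-2)/2$ is strictly positive, so $(-v)'\ge 0$ at every point of $C$, with equality precisely on the set $\{u=0\}\cap C$. Hence $-v$ is nondecreasing along every orbit lying on the collision manifold. It is moreover bounded there, because on $C$ one has $v^{2}\le u^{2}+v^{2}=2b\,\Delta^{-\beta/2}\le 2b$, using $\Delta=\mu\cos^{2}\theta+\sin^{2}\theta\ge 1$ for $\mu\ge 1$.

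The second step is to upgrade monotonicity to strict monotonicity off the equilibrium set, which is what ``gradient-like'' should mean. Suppose an orbit of (\ref{eqcol}) has $(-v)'\equiv 0$ on some open $\tau$-interval. Then $u\equiv 0$ there, so by the second equation $\theta$ is constant, say $\theta\equiv\theta_{0}$; differentiating the identity $u\equiv 0$ and using the fourth equation forces $b\beta(\mu-1)\sin 2\theta_{0}=0$, hence $\theta_{0}\in\{0,\pi/2,\pi,3\pi/2\}$ once $\mu>1$ (and for $\mu=1$ no constraint on $\theta_{0}$ is needed). The energy relation on $C$, namely $u^{2}+v^{2}=2b\,\Delta^{-\beta/2}$, then pins $v=\pm\sqrt{2b/\Delta^{\beta/2}}$, so the orbit sits at one of the equilibria described in the preceding proposition; by uniqueness of solutions the entire orbit is that equilibrium. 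Therefore along every nonequilibrium orbit on $C$ the quantity $-v$ is strictly increasing.

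The only genuinely delicate point is this second step — verifying that $\{u=0\}\cap C$ contains no nonconstant invariant orbit — and the essential ingredient there is the anisotropy $\mu>1$, which makes the last term in the $u$-equation vanish only at $\theta_{0}\in\{0,\pi/2,\pi,3\pi/2\}$; everything else is a one-line computation. As a byproduct one obtains that every $\alpha$- and $\omega$-limit set of an orbit on $C$ consists of equilibria, since the bounded monotone function $-v$ cannot be constant on a nontrivial limit orbit — a remark that will be convenient when analysing heteroclinic connections later on.
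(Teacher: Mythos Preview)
Your argument is correct and is essentially the paper's one-line proof --- observe from the first equation of (\ref{eqcol}) that $v'=\frac{\beta-2}{2}(-u^{2})\le 0$ for $\beta>2$ --- together with a more careful verification that no nonconstant orbit on $C$ can remain in $\{u=0\}$, a point the paper simply asserts. (One small slip: the $u'$-equation you invoke is the third equation of system (\ref{eqcol}), not the fourth.)
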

\begin{proof}
Since $\beta >2$, we see from the first equation in (\ref{eqcol})
that $v^{\prime}<0$ except at equilibria, therefore the flow on
$C$ increases with respect to $-v$, so is gradient-like relative
to it.
\end{proof}

To match the sign of $v$ and the value of $\theta$, we denote
the equilibria on $C$ by $A_{0}^{\pm}, A_{\pi/2 }^{\pm}, A_{\pi }^{\pm }$
and $ A_{3\pi/2}^{\pm}$, respectively. Observe that $\Delta
(0) = \Delta (\pi) = \mu$ and $\Delta (\pi/2) = \Delta (3\pi/2) =
1$. With this notation, we can describe the following properties of the flow.

\begin{theo} On the collision manifold $C$, the equilibria $
A_{0}^{\pm }$ and $A_{\pi }^{\pm }$ are saddles, $A_{\pi /2}^{+}$
and $A_{3\pi /2}^{+}$ are sources and $A_{\pi /2}^{-}$ and
$A_{3\pi /2}^{-}$ are sinks. Outside $C$, the equilibria
$A_{0}^{+}$, $A_{\pi }^{+}$, $A_{\pi /2}^{+}$ and $A_{3\pi
/2}^{+}$ have a local one-dimensional unstable analytic manifold,
whereas $A_{0}^{-}$, $A_{\pi }^{-}$, $A_{\pi /2}^{-}$ and $A_{3\pi
/2}^{-}$ have a local one-dimensional stable analytic manifold.
All these equilibrium points are hyperbolic.
\end{theo}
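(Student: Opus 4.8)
The plan is to linearize the restricted flow (\ref{eqcol}) on $C$ and the full flow (\ref{eqmot}) at each of the eight equilibria and read off the spectra; the saddle/source/sink trichotomy and the hyperbolicity then follow at once. Throughout I take $\mu>1$, so that the anisotropic term is genuine (for $\mu=1$ the angle $\theta$ decouples and the equilibria become degenerate). Fix an equilibrium $A_{\theta_0}^{\pm}$; there $u=0$, $v=v_0=\pm\sqrt{2b/\Delta_0^{\beta/2}}$ with $\Delta_0:=\Delta(\theta_0)$, and $\sin 2\theta_0=0$, while $(\cos 2\theta_0,\Delta_0)=(+1,\mu)$ for $\theta_0\in\{0,\pi\}$ and $(\cos 2\theta_0,\Delta_0)=(-1,1)$ for $\theta_0\in\{\pi/2,3\pi/2\}$.

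First I would handle the flow on $C$. Differentiating the defining relation of $C$ in (\ref{totcol}) shows that its gradient at $A_{\theta_0}^{\pm}$ is parallel to the $v$-axis (the $\theta$- and $u$-derivatives vanish since $\sin 2\theta_0=0$ and $u=0$), so the tangent space of $C$ there is the $(\theta,u)$-plane; accordingly the linearization of the flow on $C$ is the $2\times2$ block of the Jacobian of (\ref{eqcol}) in the variables $(\theta,u)$, which at the equilibrium reduces to
\[
M=\begin{pmatrix} 0 & 1\\ \kappa & \tfrac{\beta-2}{2}v_0\end{pmatrix},\qquad \kappa:=\frac{b\beta(\mu-1)\cos 2\theta_0}{\Delta_0^{(\beta+2)/2}},
\]
so $\det M=-\kappa$ and $\operatorname{tr}M=\tfrac{\beta-2}{2}v_0$. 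For $\theta_0\in\{0,\pi\}$ one has $\kappa>0$, hence $\det M<0$: the eigenvalues are real of opposite sign and $A_0^{\pm},A_\pi^{\pm}$ are saddles. For $\theta_0\in\{\pi/2,3\pi/2\}$ one has $\kappa<0$, hence $\det M>0$; since $\beta>2$ and $v_0\neq0$, $\operatorname{tr}M$ has the sign of $v_0$, so both eigenvalues (real or a complex conjugate pair) have real part of that sign, and $A_{\pi/2}^{+},A_{3\pi/2}^{+}$ are sources while $A_{\pi/2}^{-},A_{3\pi/2}^{-}$ are sinks. In every case $\det M\neq0$, and $\operatorname{tr}M\neq0$ in the source/sink case, so each point is hyperbolic for the flow on $C$.

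Next I would pass to the full system (\ref{eqmot}) in coordinates $(r,v,\theta,u)$. At an equilibrium $r=0$, and since $\beta>2$ the functions $r^{\beta-1}$ and $2hr^{\beta}$ together with their $r$-derivatives vanish there; combined with $\sin 2\theta_0=0$ this makes the Jacobian block-triangular, the extra diagonal entries being $v_0$ (along the $r$-axis, from $r'=rv$) and $(\beta-2)v_0$ (along the $v$-axis, from the $v'$-equation), in addition to the block $M$. Thus the full spectrum is $\{v_0,\ (\beta-2)v_0,\ \lambda_\pm(M)\}$, none of these being zero, so all eight equilibria are hyperbolic. Restricted to an energy level $\{H_\beta=h\}$ --- where the physical flow lives and to which the $v$-axis is transverse, so that the eigenvalue $(\beta-2)v_0$ plays no role --- the only direction transverse to $C$ is the $r$-axis, with eigenvalue $v_0$; for $v_0>0$ (the equilibria with superscript $+$) this gives the asserted local one-dimensional unstable manifold off $C$, and for $v_0<0$ (superscript $-$) the local one-dimensional stable manifold off $C$. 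Their existence and analyticity come from the (analytic) stable and unstable manifold theorem, and the conclusion is independent of $h$, which enters only through the $r^{\beta}$ term.

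The computation is otherwise mechanical; the points needing care are the correct identification of the tangent plane to $C$, the use of $\mu>1$ to guarantee $\kappa\neq0$ (failing which the $C$-linearization acquires a zero eigenvalue), and the observation that $\beta>2$ is precisely what annihilates the $r^{\beta-1}$ contribution to the linearization at $r=0$. I expect the main nuisance to be bookkeeping the eigenvalue signs uniformly across the four angular positions and the two choices of $v_0$, which the $(\kappa,\operatorname{tr}M)$ dichotomy above is meant to streamline. One genuine, if minor, subtlety: for non-integer $\beta$ the field (\ref{eqmot}) is only finitely differentiable at $r=0$, so to obtain an \emph{analytic} branch off $C$ one either restricts $\beta$ or runs a Briot--Bouquet-type argument for that single orbit; hyperbolicity and the dimension count are unaffected.
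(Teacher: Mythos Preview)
Your argument is correct and follows essentially the same route as the paper: linearize (\ref{eqmot}) at each equilibrium, observe the block structure, and read off the spectrum, with the restriction to the energy surface removing the $(\beta-2)v_0$ eigenvalue in the $v$-direction. Your trace/determinant shortcut for the $2\times 2$ $(\theta,u)$-block is a slight streamlining over the paper's explicit eigenvalue computation, and your closing caveat about analyticity for non-integer $\beta$ is a point the paper does not address.
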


\begin{proof}
Consider the function
$$F(r,v,\theta ,u) = u^2 + v^2 - r^{\beta -1} - \frac{2b}{\Delta^{\beta/2}}
-2hr^{\beta}= 0.$$
According to equation (\ref{eqenerg}), the surface of constant energy $M_{h}$
defined by the equation  $$F(r,\theta ,v,u)=0$$ is a 3-dimensional  manifold. At every point $B$ of $M_h$, the tangent space is given by
$$T_{B}F=\{(r,v,\theta ,u)|\nabla F(B)\cdot (r,v,\theta,u)=0\}.$$
At any equilibrium point $A$, the tangent space is defined by
$$ T_AF=\{(r,\theta,v,u)| \; v=0\}.$$

A straightforward computation shows that at the equilibria $A_{0}^{\pm }$ and
$A_{\pi }^{\pm }$ the linearized system corresponding to
(\ref{eqmot}) has the matrix
\[
\left[
\begin{array}{cccc}
\pm \sqrt{\frac{2b}{\mu ^{\beta/2}}} & 0 & 0 & 0 \\
0 & (\beta - 2) \pm (\sqrt{ \frac{2b}{\mu ^{\beta /2}} }) & 0 & 0 \\
0 & 0 & 0 & 1 \\
0 & 0 & \frac{b \beta (\mu -1)}{\mu ^{\frac{\beta + 2}{2}}} & \pm
\frac{(\beta - 2)}{2}( \sqrt{\frac{2b}{\mu ^{\beta/2}}})
\end{array}
\right] ,
\]
therefore the linear part of the vector field (\ref{eqmot})
restricted to $ T_{A_{0,\pi }}^{+,-}$ is given by
\[
L=\left(
\begin{array}{c}
\pm \sqrt{\frac{2b}{\mu ^{\beta/2}}} r\\
0 \\
u \\
\frac{b \beta (\mu -1)}{\mu ^{\frac{\beta + 2}{2}}} \theta + \pm
\frac{(\beta - 2)}{2}( \sqrt{\frac{2b}{\mu ^{\beta/2}}}) u
\end{array}
\right) .
\]
As a basis for the tangent space $T_{A_{0,\pi
}}^{+,-}$, we take the vectors
\begin{equation}
\xi _{1}=\left(
\begin{array}{c}
1 \\
0 \\
0 \\
0
\end{array}
\right) ,\qquad \xi _{2}=\left(
\begin{array}{c}
0 \\
0 \\
1 \\
0
\end{array}
\right) ,\qquad \xi _{3}=\left(
\begin{array}{c}
0 \\
0 \\
0 \\
1
\end{array}
\right) .  \label{basis}
\end{equation}
The representation of the linear part $L$ relative to this basis is given
by the matrix
\[
J^{*}=\left(
\begin{array}{ccc}
\pm \sqrt{\frac{2b}{\mu ^{\beta/2}}} & 0 & 0 \\
0 & 0 & 1 \\
0 & \frac{b \beta (\mu -1)}{\mu ^{\frac{\beta + 2}{2}}}  & \pm
\frac{(\beta - 2)}{2}( \sqrt{\frac{2b}{\mu ^{\beta/2}}})
\end{array}
\right).
\]
The characteristic polynomial shows that all eigenvalues are real
and that the equilibrium is a saddle in each case.

At the equilibria $A_{\pi /2}^{\pm }$ and $A_{3\pi /2}^{\pm }$,
the same linearized system has the matrix
\[
\left[
\begin{array}{cccc}
\pm \sqrt{2b} & 0 & 0 & 0 \\
0 & (\beta - 2) \pm (\sqrt{2b}) & 0 & 0 \\
0 & 0 & 0 & 1 \\
0 & 0 & - b \beta (\mu -1) & \pm \frac{(\beta - 2)}{2}( \sqrt{2b})
\end{array}
\right].
\]
Using the vectors given in (\ref{basis}) as a basis for the tangent space
$T_{A_{\pi/2,3\pi/2 }}^{+,-}$, the linear part is given by the matrix
\[
J^{*}=\left(
\begin{array}{ccc}
\pm \sqrt{2b} & 0 & 0 \\
0 & 0 & 1 \\
0 & b \beta (\mu -1)  & \pm \frac{(\beta - 2)}{2}( \sqrt{2b})
\end{array}
\right).
\]
The eigenvalues at $A_{\pi /2}^{+}$ and at $A_{3\pi
/2}^{+}$ are
$$\sqrt{2b}, \quad
\frac{(\beta - 2)}{2}( \sqrt{2b}) + \sqrt{\frac{b}{2}[(\beta -2)^2
- 8\beta(\mu - 1)]},$$ and
$$ \frac{(\beta - 2)}{2}( \sqrt{2b}) -
\sqrt{\frac{b}{2}[(\beta -2)^2 - 8\beta(\mu - 1)]}.$$
This means that all of them are positive or have positive real part.
At $A_{\pi /2}^{-}$ and $A_{3\pi /2}^{-}$ the eigenvalues
are all negative or have negative real part.
\end{proof}

\begin{cor} The set of initial conditions leading to collisions or
ejections has positive measure.
\end{cor}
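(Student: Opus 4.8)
The plan is to read off the corollary from Theorem 3.1 by giving the equilibria $A_{\pi/2}^{\pm}$ and $A_{3\pi/2}^{\pm}$ their dynamical meaning. By Theorem 3.1, on each energy manifold $M_h$ the point $A_{\pi/2}^{-}$ (and likewise $A_{3\pi/2}^{-}$) is a hyperbolic sink: its two--dimensional stable manifold inside $C$ and its one--dimensional stable manifold transverse to $C$ together span the three--dimensional $M_h$, so $A_{\pi/2}^{-}$ has an open basin of attraction $B_h^{-}\subset M_h$ of positive three--dimensional measure; symmetrically $A_{\pi/2}^{+}$ and $A_{3\pi/2}^{+}$ are hyperbolic sources with open repelling basins $B_h^{+}\subset M_h$.

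First I would check that convergence to these equilibria in the rescaled time $\tau$ really corresponds to a total collision (or ejection) at a finite value of the physical time $t$. Along an orbit in $B_h^{-}$ one has $r^{\prime}=rv$ with $v\to-\sqrt{2b}<0$ as $\tau\to+\infty$ (since $\Delta=1$ at $\theta=\pi/2$), so $r(\tau)\to 0$ exponentially; since $dt/d\tau=r^{\beta/2+1}$, the integral $\int^{+\infty}r^{\beta/2+1}\,d\tau$ converges, hence $t\to t^{*}<\infty$ and the corresponding solution of (\ref{eqmotion}) reaches $x=y=0$ at the finite time $t^{*}$. The same estimate with $\tau\to-\infty$ shows that the orbits in $B_h^{+}$ were ejected from a total collision at a finite time in the past.

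Then comes the measure count. The transformation (\ref{mcgehee}) followed by the rescalings of $\tilde v,\tilde u$ and of time is a real--analytic diffeomorphism on $\{r>0\}$, and the time reparametrization is strictly monotone there; so it carries collision (ejection) orbits to collision (ejection) orbits and preserves the class of positive Lebesgue measure sets, and it suffices to exhibit such a set in the rescaled coordinates. Take $\mathcal C^{-}=\bigcup_{h\in\mathbb{R}}B_h^{-}$: each slice $\mathcal C^{-}\cap M_h=B_h^{-}$ has positive three--dimensional measure, and $\mathcal C^{-}$ is in fact open in the four--dimensional phase space, because if $p\in B_{h_0}^{-}$ then the forward orbit of $p$ enters a trapping neighbourhood of $A_{\pi/2}^{-}$ in finite $\tau$--time, and by continuous dependence on the initial data and on the energy the forward orbits of all points near $p$ do the same, hence lie in $B_h^{-}$ for their own energy. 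A nonempty open set has positive four--dimensional measure; intersecting with $\{r>0\}$ and pulling back gives a positive--measure set of collision initial conditions for (\ref{eqmotion}), and applying the time--reversing symmetry $S_0$ (or repeating the argument with $\mathcal C^{+}$) gives one of ejection initial conditions.

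Everything here is routine except one point: the passage from ``positive measure on each three--dimensional level $M_h$'' to ``positive measure in the four--dimensional phase space.'' I would handle it either as above, or alternatively by linearizing (\ref{eqmot}) at $A_{\pi/2}^{-}$ in the full four variables, where the eigenvalues are $-\sqrt{2b}$ (in $r$), $-(\beta-2)\sqrt{2b}$ (in $v$), and the two eigenvalues with negative real part coming from the $(\theta,u)$--block, all in the open left half--plane; hence $A_{\pi/2}^{-}$ is a four--dimensional sink and has an open basin of attraction outright.
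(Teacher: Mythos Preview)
Your proof is correct and follows the same idea as the paper's: $A_{\pi/2}^{\pm}$ and $A_{3\pi/2}^{\pm}$ are sinks (sources) for the global flow, so their basins of attraction (repulsion) are open three--dimensional sets of collision (ejection) orbits. The paper's argument is two sentences and omits the finite-time verification and the measure discussion you supply, so your version simply fills in details the paper left implicit.
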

\begin{proof}
The equilibrium points $A_{\pi /2}^{-}$ ($A_{\pi /2}^{+}$) and
$A_{3\pi /2}^{-}$ ($A_{3\pi /2}^{+}$) are sinks (sources) for the
global flow, therefore their basin of attraction (repulsion) is a
3-dimensional set of collision (ejection) orbits.
\end{proof}

\begin{remark} If  $0<\mu < \frac{(\beta + 2)^2}{8\beta}$, all the
eigenvalues are real and positive. If $\mu >\frac{(\beta
+ 2)^2}{8\beta}$, there are two complex eigenvalues with
positive real part, so some orbits have the spiraling property,
i.e.\ engage into an infinite spin. For example, to have spiraling
orbits in the case $\beta=3$, it is necessary that $\mu > \frac{25}{24}$. Therefore when $\mu$ is sufficiently close to 1, no spiraling orbits exist.
\end{remark}

\begin{remark}
The inequalities in Remark 3.1 do not depend on the parameter $b$.
\end{remark}
\begin{remark}
For $\mu$ close enough to 1, though the set of collision
orbits has positive measure, there are no spiralling orbits.
\end{remark}

\section*{\large\bf IV. Saddle-Saddle Connections on $C$.}

Using ideas similar to those found in Ref. 7, we will further
study the existence of saddle-saddle connections on $C$ for $\mu>1$ and
$\beta$ of the form
\[
\beta=2+\frac{2}{1+2k} \quad \mbox{and}\quad \beta=2+\frac{1}{1+k},
\]
$k$ integer, $k\neq -1$. The reason for choosing these
values of $\beta$ will be clarified below. In the
proof we restrict ourselves to the cases $\beta=3,4$. This is
because the method requires the computation of an integral for each
value of $k$, and each integral has to be computed separately. But
the principle is the same for every such $k$.

To show that there are no heteroclinic connections is sufficient to
prove that  the stable and unstable manifolds of corresponding fixed points miss each
other. We will show that this holds true for most values of $\mu>1$.

It is now convenient to introduce different coordinates on $C$.
Since $C$ is homeomorphic to a torus, we can describe the flow
using angle variables. With the transformations
\beq
\begin{split}
 u=&{\sqrt{2b}\over \Delta^{\beta/4}}\sin\psi, \\
 v=&{\sqrt{2b}\over\Delta^{\beta/4}}\cos\psi,
\end{split}
\eeq
we can rewrite the flow on the equations of motion (\ref{eqcol}) on $C$ as
\beq
\begin{split}
\theta'=&{\sqrt{2b}\over\Delta^{\beta/4}}\sin\psi\\
\psi'=&{\beta-2\over 2}{\sqrt{2b}\over \Delta^{\beta/4}}\sin\psi+
{\beta(\mu-1)\over 4}{\sqrt{2b}\over \Delta^{\beta+4\over 4}}\sin 2\theta
\cos \psi.
\end{split}
\label{angleontorus}
\eeq
The equilibrium points in the new variables $(\theta,\psi)$ are
$A^-_{\pm \pi/2}=(\pm \pi/2,0)$, $A^+_{\pm \pi/2}=(\pm \pi/2,\pi)$,
$A_0^-= (0,0)$, $A_0^+=(0,\pi)$, $A^-_\pi=(\pi,0)$ and $A^+_\pi=(\pi,\pi)$.

Notice that if $\mu=1$ (the isotropic case), the collision
manifold is a torus for which the upper and lower circles $C^+$
and $C^-$ consist of fixed points. The equations (\ref{angleontorus})
take the form
\beq
\begin{split}
\theta'=&\sqrt{2b}\sin\psi\\
\psi'=&{\beta-2\over 2}\sqrt{2b}\sin\psi.
\end{split}
\label{angleontorus1}
\eeq
It ie easy to see that in this case there are heteroclinic orbits that
connect the critical points $A_0^+$ to $A_\pi^-$ and $A_\pi^+$ to $A_0^-$
when
\[
\beta=2+\frac{2}{1+2k},
\]
$k$ integer, $k\neq -1$, and connect the critical points
$A_{-\pi}^+$ to $A_\pi^-$ and $A_\pi^+$ to $A_{-\pi}^-$ when
\[
\quad \beta=2+\frac{1}{1+k},
\]
$k$ integer, $k\neq -1$.

Figure 2(a) depicts the collision manifold for $\mu=1$ and $\beta=3$
and the heteroclinic connections on it, while Figure 2(b) does the
same for $\mu=1$ and $\beta=4$.

\begin{center}
\begin{figure}[t!]
\begin{center}
\begin{minipage}[h]{14cm}
\parbox[t]{6.5cm}{\begin{center}
\rotatebox{0}{\resizebox{!}{5cm}{\includegraphics{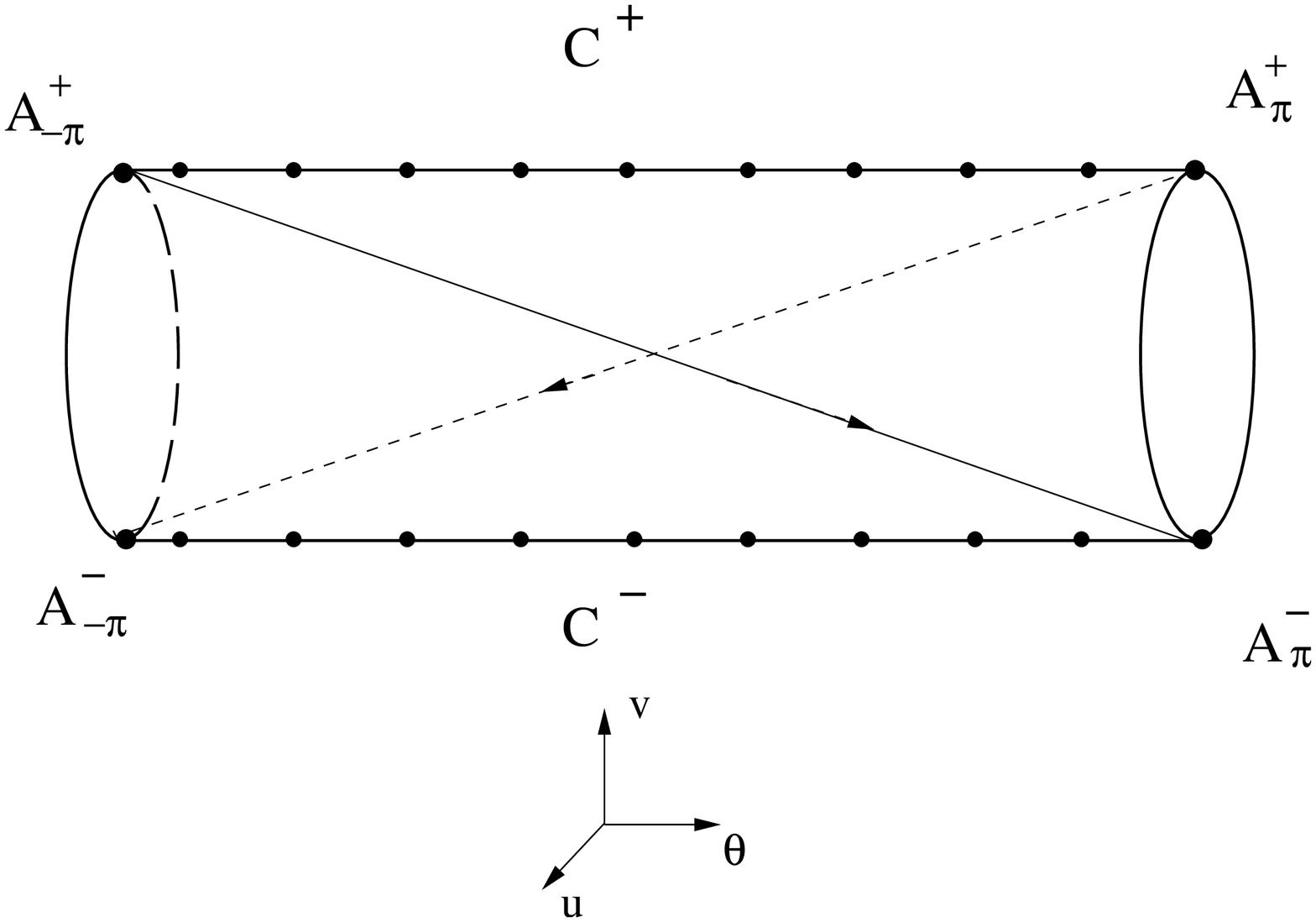}}}\end{center}
\begin{center}{\footnotesize(a)} \end{center} }
\parbox[t]{6.5cm}{\begin{center}
\rotatebox{0}{\resizebox{!}{5cm}{\includegraphics{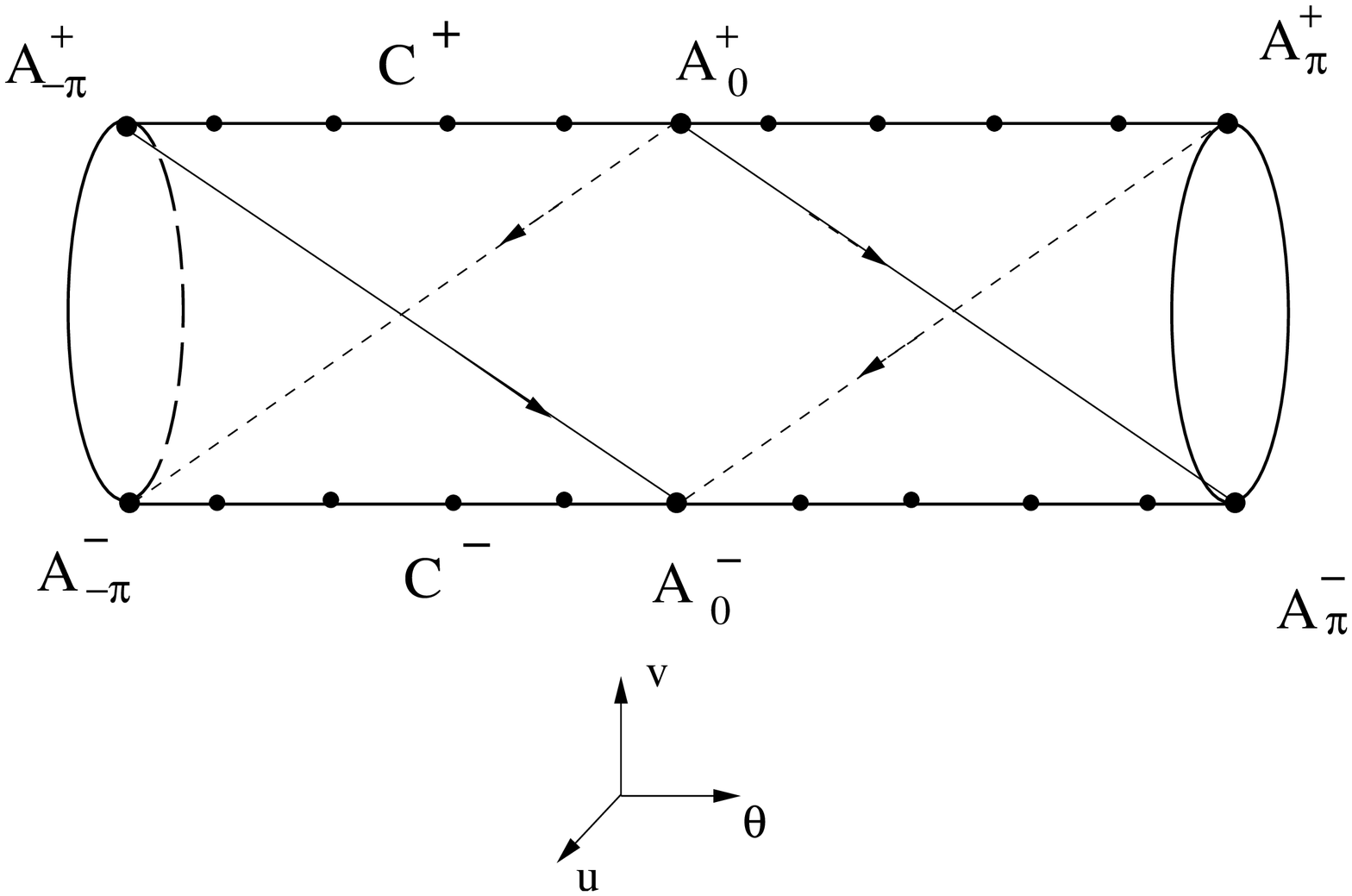}}
 } \end{center} \begin{center}{\footnotesize(b)} \end{center} }
\end{minipage}
\end{center}
\begin{center}
\caption{The collision manifold and the heteroclinic connections
for (a) $\mu=1$, $\beta=3$ and for (b) $\mu=1$, $\beta=4$.}
\end{center}
\label{cylinder}
\end{figure}
\end{center}

In the following we will show that if $\mu-1=\epsilon>0$ and small,
such saddle-saddle connections are broken and the same result holds
for an open dense set of $\mu-1=\ep>0$.

\begin{theo}
For $\beta=3$ and for an open and dense set of real numbers $\mu>1$, the
unstable manifolds at   $A^-_{-\pi}$  and $A^-_{\pi}$ miss the stable manifolds at  $A^+_\pi$ and $A^+_{-\pi}$.
For $\beta=4$ and for an open and dense set of real numbers $\mu>1$, the
unstable manifolds at $A^+_0$ and $A^+_{-\pi}$ miss the stable manifolds at $A^-_\pi$ and  $A^-_0$.

\label{Prmanifolds}
\end{theo}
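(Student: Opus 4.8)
The plan is a perturbative splitting (Melnikov) argument carried out entirely on the collision torus $C$, following Ref.~7. We work in the angle coordinates $(\theta,\psi)$ of (\ref{angleontorus}). For every $\mu$, the four points $A_0^{\pm},A_\pi^{\pm}$ (at $\theta\in\{0,\pi\}$, $\psi\in\{0,\pi\}$) are equilibria of (\ref{angleontorus}), since there $\sin\psi=0$ and $\sin2\theta=0$; by the linearization of Section~III they remain hyperbolic saddles, and they do not move with $\mu$. At $\mu=1$ the flow is (\ref{angleontorus1}), for which $d\theta/d\psi=\frac{2}{\beta-2}$; hence the orbits with $\psi\in(0,\pi)$ are the straight segments $\theta=\theta_0+\frac{2}{\beta-2}\psi$, with $\psi(\tau)$ solving $\psi'=\frac{\beta-2}{2}\sqrt{2b}\sin\psi$, and those whose endpoints land on the $A$'s are exactly the saddle--saddle heteroclinics listed above. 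This clarifies the remark about the ``reason for choosing these values of $\beta$'': the $\theta$-increment $\frac{2}{\beta-2}\pi$ must be an integer multiple of $\pi$, i.e.\ $\frac{2}{\beta-2}\in\mathbb Z$, which is precisely $\beta=2+\frac{2}{1+2k}$ (odd) or $\beta=2+\frac{1}{1+k}$ (even). Fix one such unperturbed heteroclinic $\gamma_0$, from $P$ to $Q$; along it $\theta'>0$, so $\theta$ is strictly monotone there, and the same holds for $\mu$ near $1$ on the branches of $W^u_\mu(P)$ and $W^s_\mu(Q)$ that reduce to $\gamma_0$ at $\mu=1$. Since $\Delta=1+(\mu-1)\cos^2\theta>0$, the field (\ref{angleontorus}) is real-analytic in $(\theta,\psi,\mu)$, hence these two branches depend real-analytically on $\mu$.

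Next we set up the splitting function and exploit analyticity. Choose $\theta_*$ strictly between $\theta(P)$ and $\theta(Q)$; by the monotonicity of $\theta$, for $\mu$ near $1$ the branch $W^u_\mu(P)$ meets $\{\theta=\theta_*\}$ at a unique $\psi^u(\mu)$ and $W^s_\mu(Q)$ at a unique $\psi^s(\mu)$, and we put $d(\mu)=\psi^u(\mu)-\psi^s(\mu)$. Then $d$ is real-analytic in $\mu$, $d(1)=0$, and $d(\mu)=0$ exactly when the chosen connection persists at $\mu$. Real-analyticity on the connected parameter range forces the zero set of $d$ to be discrete once $d\not\equiv0$, and then its complement is open and dense. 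Moreover the reflections $\theta\mapsto-\theta$ among (\ref{symmetries}) interchange the two connections claimed for $\beta=3$, and likewise those for $\beta=4$, so it is enough to treat one connection in each case. Everything reduces to showing $d\not\equiv0$.

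To do this we compute the first Melnikov coefficient. Put $\mu=1+\ep$ and expand $\Delta^{-p}=1-p\,\ep\cos^2\theta+O(\ep^2)$ in (\ref{angleontorus}) to get the $O(\ep)$ perturbation $g$ of the unperturbed field $f$ of (\ref{angleontorus1}); a short computation shows the $\cos^2\theta$-terms cancel in the wedge product, leaving
\beq
(f\wedge g)(\theta,\psi)=\frac{\beta b}{4}\,\sin 2\psi\,\sin 2\theta .
\eeq
The standard Melnikov formula for a heteroclinic joining two hyperbolic saddles (the non-Hamiltonian version, carrying the weight $\exp(-\int_0^{\tau}\operatorname{div}f)$) gives $d(\mu)=\ep\,c_0\,M_\beta+O(\ep^2)$ with $c_0\neq0$ and
\beq
M_\beta=\intR (f\wedge g)\big(\gamma_0(\tau)\big)\,\exp\!\Big(-\!\int_0^{\tau}\operatorname{div} f\big(\gamma_0(s)\big)\,ds\Big)\,d\tau ,
\eeq
with $\operatorname{div}f=\frac{\beta-2}{2}\sqrt{2b}\cos\psi$, the improper integral converging because of the saddle eigenvalues of Section~III. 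Changing the variable to $\psi\in(0,\pi)$ via $\psi'=\frac{\beta-2}{2}\sqrt{2b}\sin\psi$ and $\theta=\theta_0+\frac{2}{\beta-2}\psi$ on $\gamma_0$ (so that $\exp(-\int_0^{\tau}\operatorname{div}f)$ is a constant multiple of $1/\sin\psi$), and using $\theta_0\in\pi\mathbb Z$ so that $\sin 2\theta=\sin(\frac{4}{\beta-2}\psi)$, the integral collapses to
\beq
M_\beta=c_\beta\int_0^{\pi}\cot\psi\;\sin\!\Big(\tfrac{4}{\beta-2}\,\psi\Big)\,d\psi ,\qquad c_\beta\neq0 .
\eeq
(That $\theta_0\in\pi\mathbb Z$ and $\frac{4}{\beta-2}\in2\mathbb Z$ is exactly what keeps this integrand integrable at $\psi=0$ and $\psi=\pi$ --- the same arithmetic as above.) Writing $\cot\psi\sin(n\psi)=\tfrac12[\sin((n+1)\psi)+\sin((n-1)\psi)]/\sin\psi$ and using $\int_0^{\pi}\sin(m\psi)/\sin\psi\,d\psi=\pi$ for odd $m$, we get $M_\beta=c_\beta\pi\neq0$, in particular for $\beta=3$ ($n=4$) and $\beta=4$ ($n=2$). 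Hence $d\not\equiv0$, and the theorem follows.

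The soft ingredients --- analytic dependence of the invariant manifolds, discreteness of the zero set of an analytic function, and the reduction of the two connections to one by symmetry --- are routine. The real work is the evaluation of $M_\beta$: in the streamlined form above it is a single Dirichlet-type integral, but in the hands-on parametrization of $W^u_\mu(P)$ and $W^s_\mu(Q)$ used in Ref.~7 the corresponding integral must be set up and evaluated afresh for each admissible exponent, which is why the statement is phrased for $\beta=3,4$ only. The one genuinely non-classical point is the weight $\exp(-\int_0^{\tau}\operatorname{div}f)$ (absent when the flow is Hamiltonian): one must verify from the saddle eigenvalues of Section~III that it does not destroy convergence of $M_\beta$, and it is this same weight that supplies the $1/\sin\psi$ which, together with $\sin 2\psi$, produces the $\cot\psi$ in the final integral.
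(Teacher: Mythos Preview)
Your argument is correct and follows the same overall strategy as the paper: a first-order perturbation computation in $\ep=\mu-1$ shows nonzero splitting, and analytic dependence on $\mu$ upgrades this to an open dense set. The execution, however, is different. The paper writes the unstable manifold as a graph $\psi=\zeta^\beta(\theta,\ep)$ over $\theta$, expands $\zeta^\beta=\zeta^\beta_0+\ep\zeta^\beta_1+O(\ep^2)$, and computes $\zeta^\beta_1$ at the midpoint by integrating $\partial_\ep F_\beta$ along the unperturbed connection; it then invokes the explicit time-reversing symmetry $(\theta,\psi)\mapsto(-\theta,\pi-\psi)$ (resp.\ $(-\theta-\pi,\pi-\psi)$) to carry the unstable branch onto the stable one and read off that the two miss. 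This requires a separate integral for each $\beta$, which is why the paper states and proves only $\beta=3,4$. Your route instead computes the full Melnikov splitting via the wedge-divergence formula; the key observation that $\exp(-\int\operatorname{div}f)=\mathrm{const}/\sin\psi$ along $\gamma_0$ collapses everything to the single Dirichlet-type integral $\int_0^\pi\cot\psi\,\sin(n\psi)\,d\psi=\pi$ for even $n=4/(\beta-2)$, yielding nonvanishing splitting for \emph{all} admissible $\beta$ at once. So your argument is somewhat more efficient and more general; the paper's is more elementary in that it avoids the non-Hamiltonian Melnikov machinery and the convergence issues you flag, at the cost of doing each case by hand. One small point: the symmetry you invoke to pair up the two connections is not literally ``$\theta\mapsto-\theta$'' on $(\theta,\psi)$ but rather one of the induced reflections such as $(\theta,\psi)\mapsto(\pi-\theta,-\psi)$; this is harmless for the argument but worth stating precisely.
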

\begin{proof} Eliminating the time from equations(\ref{angleontorus}) and using the fact that
\beq
{d\over dt}\left({\sqrt{2b}\over\Delta^{\beta/4}}\right)=
{\beta(\mu-1)\sqrt{2b}\over 4\Delta^{\beta+4\over 4}}\sin 2\theta,
\eeq
we have
\beq
{d\psi \over d\theta}={d \over d\tau}\left({\sqrt{2b} \over \Delta^{\beta/4}} \right){\Delta^{\beta/2}\over 2b}{\cos\psi \over \sin^2\psi}+\left(\frac{\beta -2}{ 2}\right)=F_\beta(\theta,\psi,\epsilon),
\label{Eqpsitheta}
\eeq
where $\epsilon=\mu-1$ and $\Delta=1+\epsilon\cos^2\theta$.

First consider $\beta=3$, and the unstable manifolds $W^u_3(-\pi,0)=W^u_3(\pi,0)$. When $\ep=0$, $W_3^s(\pi,\pi)$ matches $W^u_3(-\pi,0)$. Consider the branch of $W^u_3(-\pi,0)$ which contains $(0,\pi/2)$. This curve lies along the line
\beq
-2\psi+\theta=-\pi.
\eeq
When $\ep$ varies, this branch of the unstable manifold $W^u_3(-\pi,0)$ varies smoothly on $C$. Let $\zeta^3(\theta,\ep)$ denote the $\psi-$coordinate of this curve, with $\zeta^3(-\pi,\ep)=0$.

Now let $\beta=4$ and $W^u_4(-\pi,0)=W^u_4(\pi,0)$. When $\ep=0$, $W_4^s(-\pi,0)$ matches $W^u_4(0,\pi)$. Consider the branch of $W^u_4(-\pi,0)$ that contains $(-\pi/2,\pi/2)$. This curve lies along
the line
\beq
-2\psi+2\theta=-2\pi.
\eeq
As $\ep$ varies, this branch of $W^u_4(-\pi,0)$
varies smoothly on $C$. Let $\zeta^4(\theta,\ep)$ denote the $\psi-$coordinate of this curve, with $\zeta^4(-\pi,\ep)=0$. Now we need the following result, which we will prove at the end of this demonstration.

\begin{lemma}
With the above notations,
${\partial\over \partial \ep}\zeta^3(0,\pi/2)=3/4\pi>0$ and
${\partial\over \partial \ep}\zeta^4(-\pi/2,\pi/2)=\pi/2>0$.
\label{Lepartial}
\end{lemma}
From this lemma follows that $\zeta^3(0,\ep)>0$ and $\zeta^4(0,\ep)>0$
for $\ep>0$ small. Thus, it is easy to show that
$v_{1+\ep}(0,\zeta^l(0,\ep))>0$, where $l=3,4$.
Equations (\ref{Eqpsitheta}) are reversed by the transformation
\beq
(\theta,\psi)\rightarrow (-\theta,\pi-\psi).
\eeq
If $\beta=3$, the unstable manifold through $(-\pi,0)$ is mapped onto the stable manifold through $(\pi,\pi)$. Hence the stable manifold intersects the line $\theta=0$ at some point $(0,\psi_0)$ such that $v_{1+\ep}(0,\psi_0))<0$. Consequently the stable manifold misses the unstable one for $\epsilon>0$ small.

Moreover the stable and unstable manifolds intersect only for a discrete set of $\epsilon$, since they vary analytically with $\epsilon$.

Furthermore equations (\ref{Eqpsitheta}) are reversed by the transformation
\beq
(\theta,\psi)\rightarrow (-\theta-\pi,\pi-\psi).
\eeq
If $\beta=4$, the unstable manifold through $(-\pi,0)$ is mapped onto the stable manifold  through $(0,\pi)$.  Hence the stable manifold intersects the line $\theta=0$ at some point $(0,\psi_0)$ such that $v_{1+\ep}(0,\psi_0))<0$ and the stable manifold misses the unstable one for $\epsilon>0$ small.

Moreover the stable and unstable manifolds intersect only for a discrete set of $\epsilon$, since they vary analytically with $\epsilon$.

Similar arguments can be applied to the remaining stable and
unstable manifolds. This concludes the proof of Proposition 4.1.
\end{proof}

\begin{proof}[Proof of Lemma \ref{Lepartial}.]
 Observe that $\zeta^\beta$ satisfies the equation
\beq
\zeta^\beta(\theta,\ep)=\int^\theta_{-\pi} F_\beta(\eta,\zeta^\beta(\eta),\epsilon) d\eta,
\label{EqintF}
\eeq
where $F$ is given by (\ref{Eqpsitheta}). For $\ep$ small we can write
\beq
\zeta^\beta=\zeta^\beta_0(\theta)+\ep\zeta^\beta_1(\theta)+O(\ep^2).
\eeq
We also have
\beq
\begin{split}
\zeta^3_0(\theta)=&(1/2)\theta+\pi/2,\\
\zeta^4_0(\theta)=&\theta+\pi.
\end{split}
\eeq
To compute $\zeta_1(\theta)$, we can use the Taylor expansion of
(\ref{EqintF}) with respect to $\ep$ and find
\beq
\zeta^\beta_1(\theta)=\int_{-\pi}^\theta \left ( {\partial  \over \partial \ep}F_\beta(\eta,\zeta^\theta_0(\eta),0)+
 {\partial  \over \partial \psi}F_\beta(\eta,\zeta^\beta_0(\eta),0)\zeta^\beta_1(\eta)   \right )d\eta.
\eeq
Standard computations show that
\beq
{\partial  \over \partial \ep}F_\beta(\eta,\zeta^\beta_0(\eta),\ep)= \frac{\beta}{2}\,{\frac {\cos \left( \eta \right) \sin \left( \eta \right)
\cos \left( \zeta^\beta_0(\eta) \right) }{\sin \left( \zeta^\beta_0(\eta) \right) }}+O \left( \ep
 \right)
\eeq
and that
\beq
{\partial  \over \partial \psi}F_\beta(\eta,\zeta^\beta_0(\eta),\ep)= O(\ep).
\eeq
We can now compute $\zeta^\beta_1(\theta)$ as follows,
\beq
\begin{split}
\zeta^\beta_1(\theta)&
=\int_{-\pi}^\theta \left( {\partial  \over \partial \ep}F_\beta+
 {\partial  \over \partial \psi}F_\beta \zeta_1  \right )~d\eta \\&
=\frac{\beta}{2} \int_{-\pi}^\theta \,{\frac {\cos \left( \eta \right) \sin \left( \eta \right)
\cos \left( \zeta^\beta_0(\eta) \right) }{\sin \left( \zeta^\beta_0(\eta) \right) }}~d\eta.\\
\end{split}
\eeq
When $\beta=3$,
\beq
\begin{split}
\zeta^3_1(\theta)=&\frac{3}{2} \int_{-\pi}^\theta \,{\frac {\cos \left( \eta \right) \sin \left( \eta \right)
\cos \left( \eta/2+\pi/2\right) }{\sin \left( \eta/2+\pi/2\right) }}~d\eta\\
=&-\frac 9 2\,\cos \left( \frac 1 2\,\theta \right) \sin \left( \frac 1 2\,\theta \right)
+\frac 3 4\,\theta\\
&+3\, \left( \cos \left( \frac 1 2\,\theta \right)  \right) ^{3}
\sin \left( \frac 1 2\,\theta \right) +\frac 3 4\,\pi
\end{split}
\eeq
and, in particular, for $\theta=0$, we have
\beq
\zeta^3_1(0)=\frac3 4\pi=\frac{\partial}{\partial \ep}\zeta^3(0,\frac \pi 2).
\eeq
When $\beta=4$,
\beq
\begin{split}
\zeta^4_1(\theta)=&\frac{3}{2} \int_{-\pi}^\theta \,{\frac {\cos \left( \eta \right) \sin \left( \eta \right)
\cos \left( \eta+\pi\right) }{\sin \left( \eta+\pi\right) }}~d\eta\\
=&\cos \left( \theta \right) \sin \left( \theta \right) +\theta+\pi
\end{split}
\eeq and, in particular, for $\theta=-\pi/2$, we have \beq
\zeta^4_1(0)=\frac \pi 2=\frac{\partial}{\partial
\ep}\zeta^4(-\frac \pi 2,\frac \pi 2). \eeq This concludes the
proof of Lemma 4.1.
\end{proof}


\section*{\large\bf V. Escape and Capture Solutions for $h=0$}

We will further study escape (capture) solutions, i.e.\ the ones
for which $r\rightarrow \infty$ when $t\rightarrow\infty$
($t\rightarrow-\infty$). From the energy relation (\ref{eqenerg}),
we can see that for $h<0$, the radial coordinate $r$ is bounded by
the zero velocity curve ($u=0,v=0$), so escapes exist only for
$h \geq 0$. We restrict our analysis to the case $h=0$, in which
the energy relation (\ref{eqenerg}) takes the form
\begin{equation}\label{energ-h0}
u^2 + v^2 = 2r^{\beta -1} + \frac{2b}{\Delta^{\beta/2}}.
\end{equation}

With the transformations
$$\rho=r^{-1};  \quad \bar{v}=\rho^{\frac{\beta -1}{2}}v;
\quad \bar{u}=\rho^{\frac{\beta -1}{2}}u,$$
the energy relation becomes
\begin{equation}\label{ener-rho}
\bar{u}^2 + \bar{v}^2 = 2 + \frac{2b}{\Delta^{\beta/2}}\rho^{\beta
-1}.
\end{equation}

We define the {\it infinity manifold} $I_0$ as
\begin{equation}\label{infinity}
I_0 = \{ (\rho, \bar{v}, \theta, \bar{u}) | \quad  \rho =0, \quad
\bar{u}^2 + \bar{v}^2 = 2 \}.
\end{equation}
Since the variable $\theta \in S^1$, the infinity manifold $I_0$ is a torus.

\begin{remark} The infinity manifold $I_0$ is independent of the
parameter $\beta$.
\end{remark}

After rescaling the time $\tau$ with the transformation $d\tau=\rho ^{\frac{\beta
-1}{2}}ds,$ the equations (\ref{eqmot}) take the form
\begin{eqnarray}\label{eq-inf}
\frac{d\rho }{ds} &=&-\rho \bar{v},  \nonumber \\
\frac{d\bar{v}}{ds} &=&-\frac{1}{2}\bar{v}^{2}-\frac{b(\beta -
2)}{\Delta ^{\beta /2}} \rho^{\beta -1} + 1,  \nonumber \\
\frac{d\theta }{ds} &=&\bar{u}, \\
\frac{d\bar{u}}{ds} &=&-\frac{1}{2}\bar{u}\bar{v} +
\frac{b\beta(\mu - 1)\sin{2\theta}}{2 \Delta ^{\frac{\beta
+2}{2}}}\rho^{\beta -1}.\nonumber
\end{eqnarray}

Equations (\ref{ener-rho}) and (\ref{eq-inf}) are well defined on
the boundary $\rho =0$. Consequently, the phase space of the
coordinates $(\rho ,\bar{v},\theta ,\bar{u})$ can be analytically
extended to contain the manifold $I_0$. Since $d\rho /ds=0$ for
$\rho =0$, this manifold is invariant under the flow.

\begin{prop} All the equilibrium solutions of the flow given by
(\ref{eq-inf}) lie on the infinity manifold $I_0$ and they form two
circles of equilibria given by
$$\rho=0, \quad \bar{v}=\pm \sqrt{2},
\quad \theta \in S^1 \quad \bar{u}=0.$$\end{prop}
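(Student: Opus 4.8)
The plan is to argue exactly as in Proposition 3.1, but now for system (\ref{eq-inf}), using the energy relation (\ref{ener-rho}) to exclude equilibria lying off $I_0$. First I would set the four right-hand sides of (\ref{eq-inf}) equal to zero. The third equation, $d\theta/ds=\bar u$, immediately yields $\bar u=0$. The first equation, $d\rho/ds=-\rho\bar v$, gives $\rho\bar v=0$, so at any equilibrium either $\rho=0$ or $\bar v=0$.

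Next I would rule out the possibility $\rho\neq 0$. If $\rho\neq 0$, then $\bar v=0$, and since also $\bar u=0$, the energy relation (\ref{ener-rho}) forces $0=2+\frac{2b}{\Delta^{\beta/2}}\rho^{\beta-1}$. This is impossible: $b>0$, $\Delta=\mu\cos^2\theta+\sin^2\theta\ge\min(1,\mu)>0$, and $\rho^{\beta-1}>0$ since $\beta\ge 2$, so the right-hand side is at least $2$. Hence every equilibrium must have $\rho=0$, i.e.\ must lie on $I_0$.

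Finally, restricting (\ref{eq-inf}) to $\rho=0$ — where every term carrying the factor $\rho^{\beta-1}$ vanishes, because $\beta-1\ge 1>0$ — the equilibrium conditions collapse to $\bar u=0$ (third equation) and $-\tfrac12\bar v^2+1=0$, i.e.\ $\bar v=\pm\sqrt 2$ (second equation), while the first and fourth equations are satisfied identically and $\theta$ is left free. Conversely, a direct substitution confirms that every point with $\rho=0$, $\bar v=\pm\sqrt 2$, $\bar u=0$, $\theta\in S^1$ is indeed an equilibrium. Since $\theta$ ranges over $S^1$ and the two values $\bar v=\sqrt 2$ and $\bar v=-\sqrt 2$ are distinct, these points form two disjoint circles, which is the assertion of the proposition.

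There is no genuine obstacle here; the proof is a short computation parallel to that of Propositions 3.1--3.2. The only points requiring a little care are the positivity bookkeeping in the energy-relation step (it is precisely $b>0$ and $\Delta>0$ that make off-$I_0$ equilibria impossible) and the remark that $\rho^{\beta-1}\to 0$ as $\rho\to 0$ needs only $\beta>1$, which holds under the standing hypothesis $\beta\ge 2$.
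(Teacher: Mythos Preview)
Your proof is correct and follows essentially the same route as the paper: use the third equation to get $\bar u=0$, use the first equation (with $\rho\neq 0$) to force $\bar v=0$, and then invoke the energy relation (\ref{ener-rho}) to reach the contradiction $0=2+\text{(positive)}$. Your write-up is in fact more explicit than the paper's, which compresses the argument into two sentences and leaves the step ``$\rho\neq 0 \Rightarrow \bar v=0$'' implicit.
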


\begin{proof}
It is obvious that any point of the above circles is an equilibrium
orbit. If $\bar{v}=0$ in (\ref{eq-inf}), by
the third equation we have that $\bar{u}=0$, but this is a
contradiction with the energy relation given by (\ref{ener-rho}).
This proves the result.
\end{proof}
\bigskip

On the infinity manifold $I_0$, the equations of motion take the
form
\begin{eqnarray}\label{eq-I}
\frac{d\bar{v}}{ds} &=&-\frac{1}{2}\bar{v}^{2}+ 1 = \bar{u}^2/2,  \nonumber \\
\frac{d\theta }{ds} &=&\bar{u}, \\
\frac{d\bar{u}}{ds} &=&-\frac{1}{2}\bar{u}\bar{v}. \nonumber
\end{eqnarray}

We can now prove the following properties.

\begin{prop} The flow on $I_0$ is {\it gradient-like} with respect to the
$\bar{v}$-coordinate.\end{prop}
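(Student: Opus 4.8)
The plan is to show that along any non-equilibrium solution of the flow on $I_0$, the coordinate $\bar v$ is strictly monotone, and that $\bar v$ is constant only at the equilibria; this is precisely the gradient-like property. The natural candidate for the ``potential'' (Lyapunov-type function) is simply $\bar v$ itself, so the first step is to examine the sign of $d\bar v/ds$ using the first equation in (\ref{eq-I}), namely $d\bar v/ds = -\tfrac12\bar v^2 + 1 = \bar u^2/2$. The key observation is that on $I_0$ the energy relation (\ref{ener-rho}) with $\rho=0$ reads $\bar u^2+\bar v^2 = 2$, so $-\tfrac12\bar v^2+1 = \tfrac12(2-\bar v^2) = \tfrac12\bar u^2 \ge 0$. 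Hence $d\bar v/ds \ge 0$ everywhere on $I_0$, with equality if and only if $\bar u = 0$.

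Next I would argue that $\bar u=0$ forces the orbit to be an equilibrium. On $I_0$, the locus $\bar u = 0$ together with $\bar u^2 + \bar v^2 = 2$ gives $\bar v = \pm\sqrt2$, which by Proposition 5.2 is exactly the union of the two circles of equilibria. One must check that this locus is invariant and that no nonconstant orbit can touch it: if $\bar u(s_0)=0$ for some $s_0$ along a solution, then at that instant $\bar v(s_0)=\pm\sqrt2$ and $d\bar u/ds = -\tfrac12\bar u\bar v = 0$, $d\bar v/ds = 0$, $d\theta/ds = \bar u = 0$, so the point is a fixed point of (\ref{eq-I}) and the solution is constant. Therefore every non-equilibrium orbit has $\bar u \ne 0$ for all $s$, so $d\bar v/ds = \bar u^2/2 > 0$ along it, i.e. $\bar v$ is strictly increasing. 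Combined with the fact that the only orbits on which $\bar v$ fails to strictly increase are the equilibria, this establishes that the flow on $I_0$ is gradient-like with respect to $\bar v$.

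I do not anticipate a serious obstacle here; this is a short argument once one uses the energy relation to rewrite $-\tfrac12\bar v^2+1$ as $\tfrac12\bar u^2$. The only point requiring a little care is the logical structure of ``gradient-like'': one should state clearly what one means (e.g. that $\bar v$ is a strictly monotone function of $s$ along every non-equilibrium trajectory, so $\bar v$ plays the role of a global Lyapunov function ordering the flow), and then note that this matches the identical phenomenon already proved for the collision manifold in Proposition 3.2. If a cleaner presentation is desired, one can also simply say: by (\ref{eq-I}) and (\ref{ener-rho}) restricted to $\rho=0$, $d\bar v/ds = \bar u^2/2 \ge 0$, vanishing only where $\bar u=0$, which (again by the energy relation) is the equilibrium set, and conclude.
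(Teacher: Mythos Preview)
Your argument is correct and is essentially the same as the paper's: the paper simply observes from the first equation in (\ref{eq-I}) that $\bar v' = \bar u^2/2 > 0$ except at equilibria, which is exactly your ``cleaner presentation'' in the last paragraph. Your additional verification that the locus $\bar u=0$ consists only of equilibria is a welcome bit of care that the paper leaves implicit.
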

\begin{proof}
From the first equation in (\ref{eq-I}), we obtain that $\bar{v}^{\prime}>0$
except at equilibria, which proves the gradient-like property.
\end{proof}

In agreement with the sign of $\bar{v}$, and by similarity with
the collision manifold for $\mu=1$ studied in Section 4, we also
denote the equilibria on $I_0$ as $C^{\pm}$, respectively.

\begin{theo} \label{theorem5.1} On the infinity manifold $I_0$,
the two circles of equilibria $C^{+}$ and $C^{-}$ are normally
hyperbolic. $C^{+}$ corresponds to a sink, whereas $C^{-}$
corresponds to a source. The escape orbits are the ones having
$C^+$ as an $\omega$-limit, whereas capture orbits are the ones
having $C^-$ as an $\alpha$-limit.

\end{theo}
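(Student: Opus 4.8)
The plan is to analyze the linearization of the full system (\ref{eq-inf}) along the two circles $C^\pm$ and show they are normally hyperbolic, with the normal direction attracting at $C^+$ and repelling at $C^-$; the tangent (circle) direction contributes a zero eigenvalue, which is exactly what one expects and tolerates for a normally hyperbolic invariant manifold. First I would compute the Jacobian of the right-hand side of (\ref{eq-inf}) at a generic equilibrium point $(\rho,\bar v,\theta,\bar u)=(0,\pm\sqrt2,\theta_0,0)$. Because every term carrying $\rho^{\beta-1}$ (with $\beta\ge 2$, in fact $\beta>2$ here) and its $\rho$-derivative vanish at $\rho=0$, the Jacobian is block triangular: the $\rho$-equation linearizes to $d\rho/ds=-\bar v\,\rho=\mp\sqrt2\,\rho$, decoupled from the rest, giving an eigenvalue $\mp\sqrt2$; the $\theta$-row has no $\theta$-dependence at $\rho=0$, so $\partial_\theta$ of every equation is zero and the $\theta$-direction yields the zero eigenvalue along the circle. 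What remains is the $(\bar v,\bar u)$-block, which at the equilibrium reads
\beq
\begin{pmatrix} d\bar v/ds \\ d\bar u/ds \end{pmatrix}'
= \begin{pmatrix} -\bar v & 0 \\ 0 & -\tfrac12\bar v \end{pmatrix}
\begin{pmatrix} \bar v \\ \bar u \end{pmatrix},
\eeq
so the transverse eigenvalues are $-\bar v=\mp\sqrt2$ and $-\tfrac12\bar v=\mp\tfrac{\sqrt2}{2}$. Hence at $C^+$ ($\bar v=+\sqrt2$) all three nonzero eigenvalues are negative and at $C^-$ ($\bar v=-\sqrt2$) all three are positive.

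Next I would package this as normal hyperbolicity in the proper sense: the tangent bundle of $I_0$ restricted to the circle $C^\pm$ splits as $T C^\pm \oplus N^\pm$, with $TC^\pm$ the (one-dimensional) direction along the circle on which the linearized flow is trivial, and $N^\pm$ the three-dimensional complement spanned by the $\rho$-direction and the two-dimensional $(\bar v,\bar u)$-block just computed. Since the linearized flow contracts $N^+$ uniformly (all eigenvalues $\le -\tfrac{\sqrt2}{2}<0$) and expands $N^-$ uniformly, and there is no dynamics at all in the tangent direction, the standard criterion for a normally hyperbolic invariant manifold is satisfied, with $C^+$ a normally attracting circle and $C^-$ a normally repelling one. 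In particular $C^+$ behaves as a sink and $C^-$ as a source for the flow of (\ref{eq-inf}) on a neighborhood in $M_0$.

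For the last two assertions I would argue with the gradient-like structure. By Proposition 5.2 (Proposition \ref{prop...} — the one stating the flow on $I_0$ is gradient-like in $\bar v$) and the sign computation $d\bar v/ds=\bar u^2/2\ge 0$, together with $d\rho/ds=-\rho\bar v$, an orbit with $\rho\to 0$ as $s\to+\infty$ must have $\bar v\to +\sqrt2$, i.e. it limits on $C^+$, and since $\rho\to 0$ is equivalent to $r\to\infty$, this is precisely an escape orbit; reversing time gives capture orbits limiting on $C^-$. Conversely, by the normal attractivity of $C^+$, its (local) stable manifold is an open four-dimensional set whose every forward orbit has $\rho\to 0$, so escape orbits exist in abundance; the symmetry $S_0$ (time reversal) then produces the corresponding family of capture orbits, which also shows $C^-$ as the $\alpha$-limit set. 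I expect the only delicate point to be making the reduction to the $(\bar v,\bar u)$-block fully rigorous — i.e. checking carefully that each $\rho^{\beta-1}$ term and its $\rho$-partial derivative genuinely vanish at $\rho=0$ for all admissible $\beta\ge 2$, so that the Jacobian really is block-triangular and the zero eigenvalue is confined to the harmless tangent direction; everything else is a short computation.
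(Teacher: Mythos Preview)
Your argument is correct and follows the same overall strategy as the paper: linearize at the equilibria, read off the eigenvalues, and conclude normal hyperbolicity with $C^+$ attracting and $C^-$ repelling. The only organizational difference is that the paper first restricts to the three-dimensional energy level $G^{-1}(0)$ (exactly as in the proof of Theorem~3.1), using the basis $\xi_1,\xi_2,\xi_3$ in the $(\rho,\theta,\bar u)$-directions, and obtains the $3\times 3$ matrix
\[
\begin{pmatrix}
-v_0 & 0 & 0\\
0 & 0 & 1\\
0 & 0 & -v_0/2
\end{pmatrix},
\]
with eigenvalues $-v_0,\,0,\,-v_0/2$. You instead linearize the full four-dimensional system (\ref{eq-inf}) and pick up the additional eigenvalue $-v_0$ in the $\bar v$-direction, which is transverse to the energy surface. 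Since the energy level is invariant, your extra eigenvalue is harmless and your conclusion about the restricted flow is the same; the paper's reduction simply avoids carrying that redundant direction. Your discussion of escape and capture orbits via the gradient-like property and the time-reversal symmetry is also in the same spirit as the paper, which treats this part very briefly.
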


\begin{proof}
The proof is similar to the one of Theorem 3.1, so we will only sketch
the main steps. From equation (\ref{ener-rho}), we define
$$G(\rho,\bar{v},\theta ,\bar{u}) = \bar{u}^2 + \bar{v}^2 - 2 -
\frac{2b}{\Delta^{\beta/2}}\rho^{\beta -1}= 0.$$ Then $I_{\infty}$
is the 3-dimensional manifold given by $G^{-1}(0)$. To study the
tangent spaces to this manifold at the equilibria, we use as a
basis the same vectors (\ref{basis}). Then the linear
representation of the vector field (\ref{eq-inf}) at any equilibrium
$C^{+}$ and $C^{-}$ is given by the matrix
\[
\left(
\begin{array}{ccc}
-v_0 & 0 & 0 \\
0 & 0 & 1 \\
0 & 0 & \frac{-v_0}{2}
\end{array}
\right).
\]

Notice that for $C^+$, $v_0 = \sqrt{2}$ and for $C^-$, $v_0 =-\sqrt{2}$.
In the former case two eigenvalues are negative and one is zero, whereas
in the latter case two eigenvalues are positive and one is zero. This
completes the proof of the normal hyperbolicty and shows the existence
of infinitely many escape orbits and capture orbits.

\end{proof}

The next result, which is a direct consequence of Theorem
\ref{theorem5.1}, characterizes the flow on the infinity manifold.

\begin{cor}
For $h=0$, the infinity manifold $I_0$ is foliated by heteroclinic
orbits between $C^-$ to $C^+$.
\end{cor}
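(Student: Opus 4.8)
The plan is to obtain the corollary directly from the phase portrait of the restricted system (\ref{eq-I}) on $I_0$, using the gradient-like property of the flow together with Theorem \ref{theorem5.1}. First I would record the geometry: on $I_0$ the energy relation (\ref{ener-rho}) with $\rho=0$ reads $\bar u^2+\bar v^2=2$, so $I_0$ is the $2$-torus $S^1_\theta\times\{\bar u^2+\bar v^2=2\}$, and on it the locus $\bar u=0$ is precisely $C^+\cup C^-$; every other point has $\bar u\neq0$ and is non-stationary. Along any non-equilibrium orbit $\gamma(s)=(\bar v(s),\theta(s),\bar u(s))$ the first equation of (\ref{eq-I}), namely $\bar v'=\bar u^2/2$, shows that $\bar v$ is strictly increasing; since $|\bar v|\le\sqrt2$ on $I_0$, it has limits $L_-=\lim_{s\to-\infty}\bar v(s)$ and $L_+=\lim_{s\to+\infty}\bar v(s)$ with $-\sqrt2\le L_-<L_+\le\sqrt2$.

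Next I would identify the limit sets. Since $I_0$ is compact, $\omega(\gamma)$ and $\alpha(\gamma)$ are nonempty, compact, connected and invariant. On $\omega(\gamma)$ one has $\bar v\equiv L_+$, hence $\bar v'\equiv0$ there, so $\omega(\gamma)$ consists of equilibria of (\ref{eq-I}); but, as shown above, all equilibria satisfy $\bar v=\pm\sqrt2$, and $L_+>L_-\ge-\sqrt2$ forces $L_+=\sqrt2$, so $\omega(\gamma)\subseteq C^+$. The symmetric argument gives $\alpha(\gamma)\subseteq C^-$. Hence every non-equilibrium orbit of (\ref{eq-I}) is a heteroclinic connection from $C^-$ to $C^+$, and since $I_0$ is the disjoint union of $C^+$, $C^-$ and these orbits, $I_0$ is foliated by heteroclinic orbits from $C^-$ to $C^+$; more precisely, $I_0\setminus(C^+\cup C^-)$ consists of two invariant annuli, each foliated by such heteroclinics. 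Strictly, the equilibrium circles $C^\pm$ themselves are not heteroclinic orbits, so the corollary should be read as this decomposition of $I_0$.

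To make the foliation explicit, and as a consistency check, I would integrate (\ref{eq-I}) by setting $\bar v=\sqrt2\cos\phi$, $\bar u=\sqrt2\sin\phi$ on the energy circle; the system becomes $\phi'=-\frac{1}{\sqrt2}\sin\phi$, $\theta'=\sqrt2\sin\phi$, so on the $\phi$-circle there are exactly two equilibria, $\phi=0$ ($=C^+$) and $\phi=\pi$ ($=C^-$), joined by the two arcs $\phi\in(0,\pi)$ and $\phi\in(\pi,2\pi)$, both flowing from $C^-$ to $C^+$; crossing with $S^1_\theta$ displays the two annuli and the $\theta$-drift along each leaf. I do not expect a real obstacle here: the only points requiring care are the standard facts that limit sets of a gradient-like flow consist of equilibria and that one must use the monotonicity (not merely the boundedness) of $\bar v$ to place the $\omega$-limit on $C^+$ and the $\alpha$-limit on $C^-$, together with the elementary observation that on $I_0$ one has $\bar u=0$ if and only if the point lies on $C^+\cup C^-$.
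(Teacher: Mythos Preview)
Your argument is correct and follows the same route as the paper: the corollary is stated there as an immediate consequence of Theorem~\ref{theorem5.1} together with the gradient-like property of $\bar v$, and the paper likewise follows it with the explicit angular parametrization on the circle $\bar u^2+\bar v^2=2$ to display the heteroclinic foliation. Your write-up is in fact more complete than the paper's, which leaves the limit-set analysis implicit.
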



The flow on $I_0$ given by equations (\ref{eq-I}) is easy to draw.
Because on $I_0,$ $\bar{u}^2 + \bar{v}^2 = 2$, we can introduce
the angular variable $\psi$ with the transformation
$$\bar{u}= \sqrt{2}\cos{\psi}, \quad \bar{v}=\sqrt{2}\sin{\psi}.$$
On $I_0$, the equations of motion take the form
$$\dot{\psi}= -2\sqrt{2}\cos{\psi}, \quad \dot{\theta}= \sqrt{2}\cos{\psi}.$$ From here we
obtain that
$$\frac{d\psi}{d\theta} = -2.$$
On $I_0$ we can also study the projection of the flow on the $v-\theta$
plane, which is given by
$$\frac{dv}{d\theta} = \frac{\sqrt{2-v^2}}{2},$$
whose solution is $v(\theta)=\sqrt{2}\sin{\frac{\theta + k}{2}}$,
where $k$ is a constant determined by the initial condition.


\section*{\large\bf VI. Integrability for $\beta=2$}

We will further study the problem for $\beta=2$ and
deal with a Hamiltonian (\ref{hamf}) of the form
 \beq
H_2=\frac{1}{2}{\bf p}^2-\frac{1}{\sqrt{x^2+y^2}}-\frac{b}{\mu x^2+y^2}.
\eeq
With the notation $\epsilon=\mu-1$, the Hamiltonian expressed in polar coordinates becomes
\beq
 H_2=\frac{p_r^2}{2}+\frac{p_\theta^2}{2r^2}-{1\over r}-{b\over r^2(1+\epsilon\cos^2(\theta))}.
\eeq

The corresponding system is integrable since it admits another first integral independent of the Hamiltonian, namely
\beq
G=\frac{p_\theta^2}{2}-\frac{b}{1+\epsilon\cos^2(\theta)}.
\eeq
Indeed,
\beq
\{H_2,G\}=\frac{\pd H_2}{\pd \theta}\frac{\pd G}{\pd p_\theta}-
\frac{\pd H_2}{\pd p_\theta}\frac{\pd G}{\pd \theta},
\eeq
and since
\beq
\frac{\pd  H_2}{\pd \theta}=-\frac{\ep\sin(2\theta)}{r^2(1+\ep\cos^2(\theta))^2}, \quad \frac{\pd  H_2}{\pd p_\theta}=\frac{p_\theta}{r^2}
\eeq
and
\beq
\frac{\pd G}{\pd \theta}=-\frac{\ep\sin(2\theta)}{(1+\ep\cos^2(\theta))^2}, \quad \frac{\pd G}{\pd p_\theta}={p_\theta}
\eeq
the Poisson's bracket is $\{H_2,G\}=0$, so $G$ and $H_2$ are linearly independent.

The existence of the integral $G$ is not surprising. Indeed it is well known that (see Ref. 19) given an Hamiltonian
\beq
\bar H=\frac{p_r^2}{2}+\frac{p_\theta^2}{2r^2}+U(r,\theta),
\eeq
the corresponding Hamilton-Jacobi equation
\beq
\frac{ \pd S}{\pd t}+\bar H\left(r,\theta;\frac{\pd S}{\pd r},\frac{\pd S}{\pd \theta}\right)=0
\eeq
(where $S=S(r,\theta,t)$ is the action expressed as function of the coordinates and time, $\pd S/\pd r=p_r$ and $\pd S/\pd \theta=p_\theta$) can be solved by separation of variables if
the potential energy is of the form
\beq
U=a(r)+\frac{b(\theta)}{r^2}.
\eeq
Since the Hamiltonian is time independent we take $S(r,\theta,t)=S_0(r,\theta)-Et$ (where $E$ is a constant), and the Hamilton-Jacobi equation for $S_0$ becomes
\beq
\frac 1 2\left(\frac{\pd S}{\pd r}\right)^2+a(r)+\frac{1}{2r^2}\left[
\left(\frac{\pd S}{\pd \theta}\right)^2+2b(\theta)\right]=E.
\eeq
Looking for a solution of the form
\beq
S_0=S_1(r)+S_2(\theta),
\eeq
we find for $S_1$ and $S_2$ the equations
\beq
\begin{split}
&\left(\frac{d S_2}{d\theta}\right)^2+2b(\theta)=2G,  \\
&\frac1 2\left(\frac{dS_1}{dr} \right)^2+a(r)+\frac{2G}{2r^2}=E,
\end{split}
\eeq
which define two independent integrals. Solving these equations leads to a solution of the Hamilton-Jacobi equations and thus to a general solution
of the equations of motion. The above technique applies to the Hamiltonian ${H_2}$ and to the additional integral $G$.

This approach also shows that the Hamiltonian system given by
$H_2$ is integrable by quadratures. The same conclusion can be reached
by directly applying the Liouville-Arnold theorem.\r{2}

\section*{\large\bf VII. The Collision Manifold for $\beta=2$}

For $\beta=2$ the equation of motion (\ref{eqmot}) in McGehee coordinates
can be written as
\beq 
\begin{split}
r'=& uv\\
v'=&2r^2h+r\\
\theta'=&u\\
u'=& \ep b\sin(2\theta) \Delta^{-2},
\end{split}
 \label{regeqmotion} \eeq where the prime denotes
differentiation with respect to $\tau$ and
$\Delta=1+\ep\cos^2(\theta)$. In McGehee coordinates, the energy
relation takes the form \beq u^2+v^2-2r-2b\Delta^{-1}=2r^2h, \eeq
where $h$ is the energy constant. The first integral $G$ can be
written as
\beq g=\frac 1 2(u^2-2b\Delta^{-1}), \eeq
where $g$ is also constant along orbits.

The vector field (\ref{regeqmotion}) is analytic on the boundary $r=0$,
since $r$ no longer occurs in the denominators of the vector field.
The collision manifold reduces to
\beq
C=\{(r,\theta,v,u):r=0,u^2+v^2=2b\Delta^{-1}\}.
\eeq
This shows that $C$ is homeomorphic to a torus. The  restriction of
equations (\ref{regeqmotion}) to $C$ yields the system
\beq
\begin{split}
v'=&0\\
\theta'=&u\\
u'=&\ep b \Delta^{-2}\sin 2\theta.
\end{split}
\eeq

All nonequilibrium orbits on $C$ are periodic. Comparing the
collision manifold and vector field above with the corresponding
ones in Ref. 6, we see that the collision manifold and the
flow for $\beta=2$ are identical to the ones of the anisotropic
Manev problem.

\section*{\large\bf VIII. Heteroclinic orbits for $\beta=2$ and $h=0$}

The main goal of this section is to study the infinity manifold
for $h=0$ and the heteroclinic orbits that connect the collision
and infinity manifold. First notice that for $h<0$ the motion is
bounded and therefore there is no infinity manifold. More precisely,
we have the following result.
\begin{prop}
If $h<0$ the motion is bounded by the zero velocity curve
\beq
\left(r_0=\frac{-1+\sqrt{1-4hb\Delta^{-1}}}{2h},v=0,\theta,u=0 \right).
\label{zerovelcurve}
\eeq
\end{prop}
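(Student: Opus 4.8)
The plan is to show that for $h<0$ the radial coordinate $r$ is confined to the region where the right-hand side of the energy relation is nonnegative, and that the extreme values of $r$ can only be attained on the surface $v=0$, $u=0$. Starting from the energy relation
\beq
u^2+v^2-2r-2b\Delta^{-1}=2r^2h,
\eeq
rewrite it as
\beq
u^2+v^2 = 2hr^2+2r+2b\Delta^{-1}.
\eeq
Since the left side is nonnegative and $h<0$, for each fixed $\theta$ the quadratic $2hr^2+2r+2b\Delta^{-1}$ in $r$ must be $\ge 0$, which forces $r$ to lie between the two roots of $2hr^2+2r+2b\Delta^{-1}=0$; the relevant (positive) bound is
\beq
r\le r_0(\theta):=\frac{-1+\sqrt{1-4hb\Delta^{-1}}}{2h},
\eeq
noting that $1-4hb\Delta^{-1}>1>0$ because $h<0$, $b>0$, $\Delta>0$, so the square root is real and $r_0>0$. (Here I would also remark that $2h<0$ flips the usual inequality, so one must be slightly careful with signs; writing $r_0 = \tfrac{1-\sqrt{1-4hb\Delta^{-1}}}{-2h}$ makes the positivity transparent.) This already shows $r$ is bounded, hence the motion cannot escape to infinity and there is no infinity manifold for $h<0$.

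Next I would identify exactly when the bound $r=r_0(\theta)$ is achieved. At such a point the right-hand side of the energy relation vanishes, so $u^2+v^2=0$, i.e. $u=0$ and $v=0$; conversely, substituting $u=v=0$ into the energy relation gives $2hr^2+2r+2b\Delta^{-1}=0$, whose positive solution is precisely $r=r_0(\theta)$. Thus the locus
\beq
\left(r_0=\frac{-1+\sqrt{1-4hb\Delta^{-1}}}{2h},\ v=0,\ \theta,\ u=0\right)
\eeq
is exactly the zero-velocity curve, and it bounds the motion in the sense claimed. One should also check that trajectories actually reach this curve as a turning set rather than merely having it as an abstract barrier: along an orbit with $r$ increasing, $r'=uv$ must change sign, and combined with the first integral $g=\tfrac12(u^2-2b\Delta^{-1})$ one sees the orbit is driven toward $u=v=0$ at the outer boundary; but for the stated proposition it suffices that $r\le r_0(\theta)$ everywhere, which is immediate from the nonnegativity of $u^2+v^2$.

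The only mildly delicate point is bookkeeping with the sign of $h$ in the quadratic formula and confirming $r_0(\theta)>0$ for all $\theta$; once that is settled the argument is a one-line consequence of the energy relation. I would therefore present the proof as: (i) rearrange the energy relation; (ii) observe the quadratic in $r$ must be nonnegative and extract the positive root $r_0(\theta)$, checking its positivity; (iii) conclude $r\le r_0(\theta)$, hence boundedness; (iv) verify that equality holds precisely on $v=0$, $u=0$, identifying the zero-velocity curve. No separate lemma is needed, and nothing beyond the energy integral already derived in Section VII is used.
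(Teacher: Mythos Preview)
Your approach is essentially identical to the paper's: rewrite the energy relation as $u^2+v^2=2hr^2+2r+2b\Delta^{-1}$, identify the zero-velocity curve with the vanishing of this quadratic, select the physically relevant root, and observe that $r>r_0$ would force $u^2+v^2<0$.

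One genuine slip to fix: your positivity check for $r_0$ is wrong. With $h<0$ we have $D:=\sqrt{1-4hb\Delta^{-1}}>1$, so
\[
\frac{-1+D}{2h}=\frac{1-D}{-2h}<0,
\]
since $1-D<0$ and $-2h>0$. The positive root of $2hr^2+2r+2b\Delta^{-1}=0$ is the one with the \emph{minus} sign,
\[
r_0=\frac{-1-\sqrt{1-4hb\Delta^{-1}}}{2h}=\frac{1+D}{2|h|}>0,
\]
exactly as the paper's own proof notes (the plus sign in the displayed statement appears to be a typo). Once you correct this, your argument and the paper's coincide; the extra remarks about turning via $r'=uv$ and the integral $g$ are not needed for the proposition.
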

\begin{proof}
Obviously  $u=0$ and $v=0$ if and only if $u^2+v^2=0$. Also $v=0$ implies $r'=0$. Using the energy relation we can draw the conclusion that $u^2+v^2=0$ if and only if $2r^2h +2r+2b\Delta^{-1}=0$. This quadratic equation has the solutions
\beq
 r=\frac{-1\pm\sqrt{1-4hb\Delta^{-1}}}{2h}.
\eeq
Since $r \geq 0$ and $h<0$ the only valid solution is the one with the minus sign. This shows that (\ref{zerovelcurve}) is the zero velocity curve. The fact that the motion is bounded by this curve follows from the fact that if $r>r_0$ then $u^2+v^2<0$.
\end{proof}

To describe the behavior of the solution at infinity we need to study
the equations (\ref{eq-inf}) with $\beta=2$ and $h=0$, that is
\beq
\begin{split}
\dot \rho=&-\rho \bar v\\
\dot {\bar v}=&-\frac{1}{2}\bar v^2+1\\
\dot \theta=&\bar u\\
\dot {\bar u}=&-\frac 1 2\bar v~ \bar u+\ep b\rho\sin2\theta\Delta^{-2},
\end{split}
\label{eqinfinity}
\eeq
where the dot denotes differentiation with respect to the new time variable
$s$. The energy relation is
\beq
\bar u^2+\bar v^2-2-2b\rho\Delta^{-1}=0
\label{energyrel}
\eeq
and the other first integral can be written as
\beq
\bar u^2-2b\rho\Delta^{-1}=2\rho g.
\label{integral}
\eeq

The infinity manifold is the set
\beq
I_0=\{(\rho, \bar v,\theta,\bar u)|\rho=0,~ \bar u^2+\bar v^2=2, ~\bar u^2=0\},
\eeq
i.e. the points in phase space that satisfy the condition $\rho=0$, the energy relation and the additional conservation relation. $I_0$ is a disconnected manifold formed by the union of two disjoint circles of fixed points lying in parallel planes.

The first two equations of the system (\ref{eqinfinity}) are independent
from the others, and we would like to determine $\rho$ and $\bar v$. If
we set $\bar v=\pm \sqrt{2}$, then $\rho=\exp(\mp\sqrt{2}(s-s_0))$ is a solution of the two equations mentioned above.
If $\bar v=\pm \sqrt 2$ the energy integral (\ref{energyrel}) gives the condition
\beq
\bar u=\pm \sqrt{2b\rho\Delta^{-1}}.
\label{condition}
\eeq
Moreover the previous condition and  equation (\ref{integral}) impose $g=0$.
Differentiating (\ref{condition}) with respect to $s$ we obtain
\beq
\dot{\bar u}=\pm\sqrt{2b}\left(\frac{\Delta^{-1/2}}{2\rho^{1/2}}\dot\rho+
\frac \epsilon 2\bar u\sqrt{\rho\Delta}~ \frac{\sin 2\theta}{\Delta^2}\dot\theta\right)
\eeq
and using the first equation in (\ref{eqinfinity}) and equation (\ref{condition}) it follows that
\beq
\dot {\bar u}=-\frac 1 2\bar v~ \bar u+\ep b\rho\sin2\theta\Delta^{-2}.
\eeq
This shows that system (\ref{eqinfinity}) admits solutions with $\bar v=\pm \sqrt{2}$.

The other solutions of the first two equations of the system  (\ref{eqinfinity}) can be found by dividing the second equation
by the first.  This leads to the equation
\beq
\frac{\bar v}{\rho}=\frac 1 2\frac{\bar v}{\rho}-\frac{1}{\rho \bar v},
\eeq
which can be solved by separating the variables. This leads to
\beq
\int_{\rho_0}^\rho \frac{d\xi}{\xi}=\int_{\bar v_0}^{z}\frac{z~dz}{\frac 1 2 z^2-1}
\eeq
and consequently yields
\beq
\rho=\left (\frac{\rho_0}{\bar v_0^2-2}\right)(\bar v^2-2),
\label{rhozero}
\eeq
where $\rho_0$ and $v_0$ are initial conditions and $\rho\geq 0$, since $\rho<0$ has no physical meaning. Now we can prove the following result
relative to the heteroclinic solutions connecting the infinity and the
collision manifolds.


\begin{theo}
The solutions whose $\omega$-limit set belongs to the infinity manifold
have the $\alpha$-limit set contained in the collision manifold. In particular,
the following properties take place:
\begin{enumerate}
\item If  $\bar v=\sqrt{2}$ ($\bar v=-\sqrt{2}$), the above solutions belong to the unstable (stable) manifold of one of the periodic orbits on the equator of the collision manifold.
\item If $0<\sqrt{1/k}< \sqrt{2b} \mbox{ and } \sqrt{1/k}\neq\sqrt{2b/\mu}
$ with $k=\rho_0/(\bar v_0^2-2)$, the above solutions belong to the unstable (stable) manifold of the periodic orbits on the collision manifold with $v=\sqrt{1/k}$ ($v=-\sqrt{1/k}$).

\item If $\sqrt{1/k}= \sqrt{2b/\mu}$ ($-\sqrt{1/k}= \sqrt{2b/\mu}$), then the
above solutions belong to the unstable (stable) manifold of one of the fixed points $A_{0}^+$, $A_{\pi}^+$
 ($A_{0}^-$, $A_{\pi}^-$).
\item If $\sqrt{1/k}= \sqrt{2b}$ (-$\sqrt{1/k}= \sqrt{2b}$), then the above solutions belong to the unstable (stable) manifold of one of the fixed points $A_{-\pi/2}^+$, $A_{\pi/2}^+$
($A_{-\pi/2}^-$, $A_{\pi/2}^+$).
\end{enumerate}
\end{theo}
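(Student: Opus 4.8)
The plan is to analyze the solutions lying in the $\omega$-limit of $I_0$ by tracking them backward in time through the extended phase space and showing they limit onto $C$. The starting point is the work already done above: a solution whose $\omega$-limit lies on $I_0$ must have $\bar v\to\pm\sqrt 2$, and by the integrability constraint forced in the discussion preceding the theorem, such solutions satisfy $g=0$ together with the relation $\rho=k(\bar v^2-2)$ of equation (\ref{rhozero}), where $k=\rho_0/(\bar v_0^2-2)$. First I would pass back from the $(\rho,\bar v,\theta,\bar u)$ variables of (\ref{eqinfinity}) to the McGehee variables $(r,v,\theta,u)$ of (\ref{regeqmotion}) via $r=\rho^{-1}$, $v=\rho^{(\beta-1)/2}\bar v=\rho^{1/2}\bar v$ and $u=\rho^{1/2}\bar u$ (with $\beta=2$), keeping in mind the time reparametrization. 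Under this change, the curve $\rho=k(\bar v^2-2)$ becomes an explicit invariant curve in the $(r,v)$-plane; a direct substitution shows that along it $v^2=\rho^{-1}\cdot(\rho\bar v^2)$ and $\rho\to 0$ corresponds to $r\to\infty$ while the other end, where $\bar v^2\to\infty$ i.e. $\rho$ finite, corresponds to $r\to 0$, that is, to collision. Computing $v$ on this collision end gives the limiting value $v^2\to 1/k$, which is exactly the quantity appearing in the statement.

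Next I would show that as $\tau\to-\infty$ (collision time) the full solution converges to the collision manifold $C=\{r=0,\ u^2+v^2=2b\Delta^{-1}\}$. Since $r'=uv$ and $v'=2r^2h+r\to 0$ as $r\to 0$, together with $g=\frac12(u^2-2b\Delta^{-1})=0$ holding identically, the solution's $\alpha$-limit set is contained in $\{r=0,\ u^2=2b\Delta^{-1}\}\subset C$, and the limiting $v$-value is the constant $\pm\sqrt{1/k}$ computed above (the sign matching $\bar v=\pm\sqrt 2$ because $\rho>0$ forces the sign of $v$ to agree with that of $\bar v$ near the collision end, via $v=\rho^{1/2}\bar v$). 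The four cases then correspond to which orbit of $C$ this $\alpha$-limit point sits on. Recall from Section VII that on $C$ the dynamics is $v'=0$, $\theta'=u$, $u'=\ep b\Delta^{-2}\sin 2\theta$, so each level $v=\text{const}$ is invariant and carries either periodic orbits or the fixed points $A_\bullet^\pm$. The generic level $v=\pm\sqrt{1/k}$ with $\sqrt{1/k}\neq\sqrt{2b},\sqrt{2b/\mu}$ meets $C$ only when $1/k<2b$ (so that $v^2=1/k\le\sup(2b\Delta^{-1})=2b$), and there it is a periodic orbit — this is case 2. The degenerate values $v^2=2b/\mu$ and $v^2=2b$ are precisely the $v$-coordinates of the equilibria: $\Delta=\mu$ at $\theta=0,\pi$ gives $u^2=2b/\mu$ shrinking to $0$, i.e. the points $A_0^\pm,A_\pi^\pm$ (case 3), and $\Delta=1$ at $\theta=\pm\pi/2$ gives the points $A_{\pm\pi/2}^\pm$ (case 4); case 1 is the limiting situation $k=0$, i.e. $\bar v$ held exactly at $\pm\sqrt 2$, which corresponds to the equator $v=0$ of $C$, a level consisting entirely of periodic orbits.

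The main obstacle I expect is not any single hard estimate but making rigorous the claim that the $\alpha$-limit is a single orbit on $C$ rather than a larger invariant subset, and correctly bookkeeping the signs and the degenerate levels. Concretely: (i) one must verify that the time reparametrizations $dt/d\tau=r^{\beta/2+1}$ and $d\tau=\rho^{(\beta-1)/2}ds$ do not collapse the infinite backward $\tau$-interval, so that ``$\alpha$-limit'' genuinely refers to the collision manifold and not merely to an endpoint reached in finite time; this is handled as in the McGehee blow-up of Section III. (ii) On the levels $v=\pm\sqrt{1/k}$ that are periodic orbits, one should confirm the solution actually spirals onto that specific periodic orbit — this follows because $v\to\pm\sqrt{1/k}$ forces $u^2\to 2b\Delta^{-1}-1/k\ge 0$ and the $(\theta,u)$-subsystem on that level is the integrable pendulum-type system from Section VII, whose non-equilibrium orbits are periodic. (iii) The case-distinction thresholds $\sqrt{1/k}\ne\sqrt{2b/\mu}$ and $\ne\sqrt{2b}$ in the hypothesis are exactly what rules out the periodic-orbit conclusion and replaces it by convergence to one of the eight hyperbolic equilibria identified in Theorem 3.1, using their stable/unstable manifold structure. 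Once these points are in place, the statement that such a solution ``belongs to the unstable (resp. stable) manifold'' of the relevant orbit for $\bar v=\sqrt 2$ (resp. $-\sqrt 2$) is immediate from the gradient-like property of the flow in the $(-v)$-direction near $C$ and in the $\bar v$-direction near $I_0$.
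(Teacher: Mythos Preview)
Your approach is essentially the paper's: pull back to McGehee coordinates, use the invariant relation $\rho=k(\bar v^2-2)$ to compute $\lim v=\pm\sqrt{1/k}$ at the collision end, bound $1/k\le 2b$ via the energy relation, and then read off which orbit of $C$ sits at level $v=\pm\sqrt{1/k}$. The paper's own proof is the bare computation of that limit plus the energy bound; you supply extra care on the $\alpha$-limit being a single orbit and on the time reparametrizations, which the paper omits.

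Two small slips to fix. First, in case~1 you write ``the limiting situation $k=0$''; it is the opposite extreme, $k=\infty$ (equivalently $1/k=0$), since $\bar v_0^2=2$ makes the denominator in $k=\rho_0/(\bar v_0^2-2)$ vanish, and indeed $v^2\to 1/k=0$ is what places you on the equator. Second, ``the other end, where $\bar v^2\to\infty$ i.e.\ $\rho$ finite'' should read $\rho\to\infty$ (so $r\to 0$), and the displayed identity $v^2=\rho^{-1}\cdot(\rho\bar v^2)$ is a typo for $v^2=\rho^{-1}\bar v^2$. Neither affects the argument.
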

\begin{proof}
To prove (1) observe that  $v=\bar v/\rho^{1/2}=\pm\sqrt{2}/\rho^{1/2}$.
Thus $\lim_{s\rightarrow \infty} v=\lim_{s\rightarrow \infty}\pm\sqrt{2}/\rho^{1/2}=\lim_{\tau\rightarrow \infty}\pm\sqrt{2}/\rho^{1/2}=\lim_{\tau\rightarrow \infty} v= 0$, since $\lim_{s\rightarrow\infty}\tau= \infty$. To prove (2),(3) and (4) we consider the limit $\lim_{\tau \rightarrow\infty  }v$, which with the help of equation (\ref{rhozero}) becomes
\[\lim_{\tau \rightarrow \infty}v=\lim_{\tau \rightarrow \infty}\frac{\bar v}{\rho^{1/2}}=\pm\lim_{\tau\rightarrow \infty}\sqrt{\frac{\rho+2k}{k}}\rho^{-1/2}=\pm\sqrt{\frac 1 k}.
\]
Moreover from the energy relation $\bar u^2+\bar v^2-2-2b\rho/\Delta=0$ and the fact that $\bar v^2=(\rho+2k)/k$ it easy to see that
\[
\bar u^2=-\rho\left( \frac 1 k-\frac{2b}{\Delta}\right),
\]
and since $\rho>0$ and $\bar u^2\geq 0$, we have
\[
\frac 1 k\leq \frac{2b}{\Delta}\leq 2b.
\]
Consequently we have shown that
\beq
0<\sqrt{\frac 1 k}\leq \sqrt{2b}.
\eeq
In particular it is clear that when $\sqrt{1/k}=\sqrt{2b/\mu}$ ($-\sqrt{1/k}=\sqrt{2b/\mu}$) the solutions lie on the unstable (stable) manifold of one of the fixed points $A_{0}^+$, $A_{\pi}^+$ ($A_{0}^-$, $A_{\pi}^-$).
Similarly when  $\sqrt{1/k}= \sqrt{2b}$ (-$\sqrt{1/k}= \sqrt{2b}$), the solutions lie on the unstable (stable) manifold of one of the fixed points $A_{-\pi/2}^+$, $A_{\pi/2}^+$ ($A_{-\pi/2}^-$, $A_{\pi/2}^-$).
In the remaining cases the solutions  lie on the unstable (stable) manifold of the periodic orbits on the collision manifold with $v=\sqrt{1/k}$ ($v=-\sqrt{1/k}$).
\end{proof}

\section*{\large\bf IX. A Perturbative Approach}

In this and the next section we regard the problem in a different way
by considering the Hamiltonian
\beq
{\mathcal H}_\beta={1\over 2}{\bf p}^2-{1\over r}-{b\over r^\beta} +\ep b {\beta\cos^2\th \over 2r^\beta} \equiv{\mathcal H}^0+b {\mathcal W}^1_\beta(r,\th)+\ep b {\mathcal W}^2_\beta(r,\th),
\label{hampertgen}
\eeq
where $\beta>3/2$, $b\ll1$, $\epsilon\ll 1$. 
This is the original Hamiltonian $H_\beta$ (defined in (\ref{hamf})) truncated to the second order. Consider, as in Ref. 5, the parabolic solutions of the unperturbed ($\ep=0$) problem (\ref{hampertgen}), defined by the Hamiltonian ${\mathcal H}^0$ of the classical Kepler problem that are on the zero-energy manifold and play the role of homoclinic solutions
corresponding to the critical point at infinity, i.e., $r=\infty,
\ \.r=0$. These solutions satisfy the equations:

\beq
\.r=\pm {\sqrt{2r-k^2}\over r},\qq\qq\.\th={k\over{r^2}},
\label{eqr}
\eeq
where $k\not=0$ is the (constant) angular momentum and the sign $-$
(resp. $+$) holds for $t<0$ (resp. $t>0$). From (\ref{eqr}) we get
\beq
\begin{split}
\pm t&={k^2+r\over 3}\sqrt{2r-k^2} + {\rm const.} \\
 \th&=\pm 2\arctan{\sqrt{2r-k^2}\over \sqrt{k^2}}+ {\rm const.}
\label{eqrtheta}
\end{split}
\eeq
We denote by \beq R=R(t) \qq {\rm and}\qq \Th=\Th(t)
\label{rtheta} \eeq the expressions giving the dependence of $r$
and of $\th$ on the time $t$. These are obtained by ``inverting''
the equations (\ref{eqrtheta}) with the conditions $R(0)=r_{min}
=k^2/2$ and $\Th(0)=\pi$. Let us emphasize that, as in
Ref. 5, it is not necessary to have the
explicit form of these functions. But it is important to remark
that $R(t)$ is even and $\Th(t)$ is odd, both in the
time variable.

The parabolic orbits can also be described in parametric form.\r{19}
If $p=k^2\neq 0$, we can write
\beq
r=\frac p 2 (1+\eta^2), \quad t= \frac{p^{3/2}} 2 \eta(1+\frac {\eta^2} 3), \quad \eta=\tan\frac \theta 2.
\label{parametric}
\eeq
We also have
\beq
\cos 2\theta= 2\frac{(1-\eta^2)^2}{(1+\eta^2)^2}-1.
\eeq
We will further use these remarks to apply the Melnikov method.

\section*{\large\bf X. The Melnikov Method}

Consider the problem defined in (\ref{hampertgen}). The homoclinic manifold, i.e., the set of solution of the unperturbed equation which are doubly asymptotic to $r=\infty, \.r=0$, is given for each value $k \neq 0$ of the angular momentum by the 2-dimensional manifold described by the family of solutions $r=R(t-t_0), \vartheta = \Th (t-t_0) + \th_0$, where $R(t)$ and $\Th(t)$ have been defined in (\ref{rtheta}), with arbitrary $t_0$, $\th_0$.

It is clear from equation (\ref{hampertgen}) that the first order in $b$ of the perturbation (i.e. the term $b {\mathcal W}_\beta^1$) does not contribute to the Melnikov integrals. Actually the perturbed Hamiltonian truncated at the first order is integrable.


The perturbation resulting from a small anisotropy vanishes as
$r \rightarrow \infty$ since ${\mathcal W}^2_\beta(r,\th) \sim 1/r^\beta$
with $\beta>3/2$ satisfies condition (18) in Ref. 5. Moreover, since the Hamiltonian ${\mathcal H}_0+\epsilon {\mathcal W}_\beta^2$ is integrable, the terms in $b^n$ for $n\geq 2$ do not contribute either.
This allows us to write the first non-vanishing effect on the Melnikov integral in the same form as in Ref. 5, with the
difference that here we can drop the dependence on time,
\beq \begin{split}
 M_1(\th_0)=&\intR \!\!
\Big[\.R(t) {\pd {\mathcal W}_\beta^2\big(R(t),\Th(t)+\th_0\big)\ov{\pd r}} \\
&+\.\Th(t) {\pd {\mathcal W}_\beta^2\big(R(t),\Th(t)+\th_0\big)\ov{\pd\th}} \Big]\d t
=0
\label{melint0}
\end{split}
\eeq
\beq
M_2(\th_0)=
\intR {\pd {\mathcal W}_\beta^2\big(R(t),\Th(t)+\th_0\big)\ov{\pd \th}}~\d t=0.
\label{melint}
\eeq
Since the perturbation ${\mathcal W}_\beta^2$ vanishes as $t \rightarrow \pm \infty$, the first Melnikov condition can be written as

\beq
M_1(\th_0)=\intR {\pd {\mathcal W}_\beta^2\big(R(t),\Th(t)+\th_0\big)\ov{\pd t}}
~\d t \equiv 0. \eeq
The above integral is identically zero because the perturbation ${\mathcal W}_\beta^2$ is not time dependent.
This simplifies our discussion since we must only find the zeroes of (\ref{melint}).

It is significant to remark, and easy to verify, that these conditions can be written also in terms of the first integrals of the unperturbed problem as

\beq M_1(\th_0) = \intR \{H_0,{\mathcal W}_\beta^2\}\big(\ldots)\ \d t = 0\eeq
and
\beq M_2(\th_0) = \intR \{K,{\mathcal W}_\beta^2\}\big(\ldots)\ \d t = 0,\eeq
where $(\ldots)$ represents the homoclinic orbit.

This resembles some properties obtained for the  Gyld\'en problem (see Refs. 5,16) and is related to the symmetries of the problem. In the Gyld\'en problem there is a perturbation that doesn't depend on the angle $\th$, but depends on time. This means that
the perturbation destroys the time homogeneity, so the Hamiltonian is not an integral of motion anymore, but doesn't destroy the rotational invariance, so the angular momentum is still conserved. Therefore only one condition is given by (\ref{melint0}). However, the anisotropy destroys the rotational symmetry  but not the  homogeneity of time, so we are left with the condition (\ref{melint}).

Here the Melnikov condition for $M_2$ becomes
\beq
M_2(\th_0) = \frac{\beta}{2}\intR \frac{\sin[ 2( \Th(t)+\th_0)]}{  R(t)^\beta}d t = 0.
\eeq
Using some trigonometry the integral can be written as
\beq M_2(\th_0)=  I_1 \cos 2 \th_0  +I_2  \sin 2\th_0, \eeq
where $I_1$ and $I_2$ are defined by
\beq \begin{split}
I_1=& \frac{\beta}{2}\intR  \frac{\sin 2 \Th(t)}{  R(t)^\beta} dt,\\
I_2=& \frac{\beta}{2}\intR  \frac{\cos 2 \Th(t)}{  R(t)^\beta} dt.
\end{split}
\eeq
Recall that $R(t)$ is an even function of time and $\Th(t)$ is an odd function. This implies  that the integrand of $I_1$ is an odd function. Therefore $I_1 \equiv 0$, and $M_2$ can be rewritten as
\beq M_2(\th_0)= I_2 \sin 2\th_0. \eeq

Using the parametric form of the parabolic orbits defined in equations  (\ref{parametric}), and observing that $dt=\frac{p^{3/2}}{2}(1+\eta^2)d\eta,$ we can write
\beq
I_2=2^{\beta-1}p^{3/2-\beta}\frac{\beta}{2}\intR \frac{1}{(1+\eta^2)^{\beta-1}}\left (\frac{2(1-\eta^2)^2}{(1+\eta^2)^2}-1\right) d\eta.
\eeq
Computing the integral, we find that
\beq\begin{split}
I_2=&\frac{2^{\beta-1}p^{3/2-\beta}\beta}{2\Gamma(\beta-1)}\sqrt{\pi}\left[ \Gamma(\beta-\frac 3 2 )\left(\frac{3}{2(\beta-1)\beta}-1\right)\right.\\ &+2\left. \frac{\Gamma(\beta+\frac 1 2 )- \Gamma(\beta -\frac 1 2 )}{(\beta-1)\beta}\right],
\end{split}\eeq
where $\Gamma(z)$ is Euler's Gamma function.
Thus $I_2(\beta)$ is an analytic function in $\beta$ for $\beta>3/2$, since $\Gamma(z)$ is analytic for $z>0$.
Recall that the Gamma function can be expressed as
\beq
\Gamma(z)=\lim_{n\to \infty} \frac{n!n^z}{z(z+1)\dots (z+n)}
\eeq
if $z\neq 0,-1,-2,-3,\dots$
Using this form of the Gamma function, and letting $A=2^{\beta-2}p^{3/2-\beta}$, we find that
\beq
\begin{split}
I_2&=A\lim_{n\to \infty}n^{\frac 3 2 }\sqrt{\pi}  \frac{\beta(\beta+1)(\beta+2)\dots(\beta-1+n)}{(\beta-\frac 3 2)(\beta- 1 2)\dots (\beta +\frac 1 2 +n)}(\beta^2-5\beta+6)\\
&=\frac{A\sqrt{\pi}~ \Gamma(\beta+\frac 1 2 )}{
(\beta-1)(\beta-\frac 3 2)(\beta- \frac 1 2)\Gamma(\beta -1)}(\beta^2-5\beta+6),
\end{split}
\eeq
which is zero if and only if $\beta^2-5\beta+6=0$, since the Gamma function $\Gamma(z)$ is always positive for $z>0$ and therefore the first factor never vanishes.
Consequently $I_2$ vanishes if and only if $\beta^2-5\beta+6=0$, namely
for $\beta=2$ or $\beta=3$, see Figure \ref{Gamma}.

\begin{figure}[t]
\begin{center}
\resizebox{!}{6cm}{\includegraphics{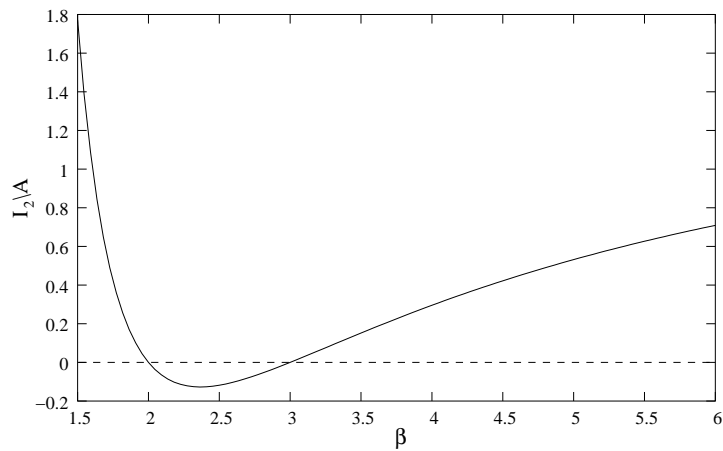}}
\end{center}
\caption{$I_2/A$ as a function of $\beta$.%
}
\label{Gamma}
\end{figure}
We have thus proved the following result.
\begin{theo}
For every $\beta>3/2$, $\beta\neq 2,3$, the system given by the Hamiltonian (\ref{hampertgen}) exhibits chaotic dynamics on the zero energy manifold.
\end{theo}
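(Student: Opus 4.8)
The plan is to deduce chaotic dynamics on $\{\mathcal H_\beta=0\}$ from a transverse homoclinic intersection, obtained in turn from the Melnikov computation already carried out in Section X. The argument I would give proceeds as follows.

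First, I would recall the geometric picture of Section IX. For $\ep=0$ the truncated Hamiltonian $\mathcal H^0+b\mathcal W^1_\beta$ is integrable (it is the Kepler problem with the radial correction $-b/r^\beta$, with both the energy and the angular momentum $k$ as commuting integrals), and on the level $\mathcal H_\beta=0$ the critical point at infinity $r=\infty$, $\dot r=0$ possesses coincident stable and unstable manifolds filled by the parabolic solutions $r=R(t-t_0)$, $\vartheta=\Th(t-t_0)+\th_0$. Since the perturbed flow is itself Hamiltonian it preserves the level $\mathcal H_\beta=0$, and since the first-order-in-$b$ term $b\mathcal W^1_\beta$ — indeed every term of higher order in the double $b$-expansion — leaves the system integrable, the only splitting of these manifolds for $\ep\neq 0$ is produced by the anisotropic term $\ep b\,\mathcal W^2_\beta$; the leading-order separation is measured by the single Melnikov function $M_2(\th_0)$, the companion integral $M_1$ vanishing identically because $\mathcal W^2_\beta$ is time-independent (equivalently, the surviving symmetry is time-translation, while rotational invariance is the one broken by the anisotropy).

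Next I would invoke the evaluation already performed, $M_2(\th_0)=I_2(\beta)\sin 2\th_0$ with
\beq
I_2(\beta)=\frac{A\sqrt{\pi}\,\Gamma(\beta+\tfrac12)}{(\beta-1)(\beta-\tfrac32)(\beta-\tfrac12)\,\Gamma(\beta-1)}\,(\beta^2-5\beta+6),
\eeq
which, for $\beta>3/2$, is nonzero exactly when $\beta\neq 2,3$ (all Gamma factors are strictly positive there, so only the quadratic factor can vanish). Hence for such $\beta$ the map $\th_0\mapsto M_2(\th_0)$ has \emph{simple} zeros at $\th_0=0,\pi/2,\pi,3\pi/2$, because $M_2'(\th_0)=2I_2(\beta)\cos 2\th_0$ does not vanish there. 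By the version of the Melnikov method adapted to a degenerate equilibrium at infinity — the framework of Ref. 5 for the Gyld\'en problem, whose decay hypothesis (condition (18) there) holds here since $\mathcal W^2_\beta\sim r^{-\beta}$ with $\beta>3/2$ — a simple zero of $M_2$ forces, for every sufficiently small $\ep\neq 0$, a transverse intersection of the stable and unstable manifolds of the equilibrium at infinity along a homoclinic orbit on the zero-energy manifold. The Smale--Birkhoff homoclinic theorem then yields a horseshoe for a suitable first-return map, hence positive topological entropy and symbolic dynamics; in particular the full system is nonintegrable for $\beta\neq 2,3$.

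The main obstacle is that the equilibrium in play — the point at infinity — is not hyperbolic but degenerate (parabolic), so the classical Melnikov--Holmes--Marsden theorem does not apply verbatim. One must instead establish (i) that the perturbed stable and unstable manifolds depend smoothly on $\ep$ near infinity, (ii) that to first order in $\ep$ the signed distance between them along the unperturbed parabolic orbit equals $\ep b\,M_2(\th_0)$ up to a nonvanishing normalizing factor, and (iii) that a transverse homoclinic orbit to such a degenerate equilibrium still produces a horseshoe. All three are supplied by the machinery of Ref. 5 once one checks that $\mathcal W^2_\beta$ satisfies its hypotheses; that verification, together with the Beta/Gamma-integral evaluation of $I_2(\beta)$ and the observation that its prefactor never vanishes for $\beta>3/2$, is the only remaining work, and it is routine given what is already assembled in Sections IX and X.
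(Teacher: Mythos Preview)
Your proposal is correct and follows essentially the same route as the paper: the Melnikov computation of Section~X yields $M_2(\th_0)=I_2(\beta)\sin 2\th_0$ with $I_2(\beta)\neq 0$ for $\beta>3/2$, $\beta\neq 2,3$, hence simple zeros and a transverse homoclinic intersection to the (degenerate) critical point at infinity. The only refinement worth noting is that the paper explicitly says the classical Smale--Birkhoff theorem does \emph{not} apply directly and invokes Ref.~4 (Burns--Weiss) for horseshoes and positive topological entropy in the nonhyperbolic setting, whereas you fold point~(iii) into the machinery of Ref.~5; that reference supplies the Melnikov splitting for the parabolic equilibrium but the symbolic-dynamics conclusion in the degenerate case comes from Ref.~4.
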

This type of chaotic behavior is induced by a chain of infinitely many intersections of the positively and negatively asymptotic sets to the
critical point at infinity. The Smale-Birkhoff theorem does not directly
apply to this situation, which is degenerate. But it is well known that
the existence of Smale horseshoes and positive topological entropy can
arise in the case of nonhyperbolic equilibria (see for example Ref. 4, where the problem is studied in the case of area-preserving diffeomorphisms).

\qq\qq

\section*{\large\bf References}
\parindent 0 pt
\footnotesize
\r{1}
 A.\ Albouy, Lectures on the Two-Body Problem, in  Classical and
Celestial Mechanics---The Recife Lectures, edited by  H.~Cabral and F.~Diacu, (Princeton Univ.\ Press, Princeton, NJ, 2002), pp.\ 63-116,.

\r{2}
V.I.\ Arnold, Mathematical Methods of Classical Mechanics, (Springer
Verlag, Berlin, 1978).

\r{3}
M.~Arribas, A.~Elipe and A.~Riaguas, ``Nonintegrability of anisotropic quasihomogeeous Hamiltonian systems",  Mechanics Research Comm.
{\bf 30}, 209-216 (2003).

\r{4}
K.~Burns and H.~Weiss, ``A geometric criterion for positive topological entropy", {\it Comm.\ Math.\ Phys.} {\bf 172}, 95-118 (1995).

\r{5}
G.~Cicogna and M.~Santoprete, ``An approach to Melnikov theory in celestial mechanics",  J. Math Phys.{\bf 41}, 805-815 (2000).

\r{6}
S.~Craig, F.~Diacu, E.A.~Lacomba and E.~P\'erez-Chavela, ``On the anisotropic Manev problem", J. Math. Phys. {\bf  40}, 1359-1375 (1999).

\r{7}
R.L.~Devaney, ``Collision orbits in the Anisotropic Kepler Problem",  Inventiones Math. {\bf 45}, 221-251 (1978).

\r{8}
F.~Diacu,  Singularities of the $N$-Body Problem---An Introduction to
Celestial Mechanics, (Les Publications CRM, Montr\'eal, 1992).

\r{9}
F.~Diacu, ``Near-collision dynamics for particle systems with quasihomogeneous potentials",  J. Diff. Eqn.
{\bf 128} (1996), 58-77.

\r{10}
F.~Diacu, ``Stability in the anisotropic Manev problem",  J.\ Phys.\ A {\bf 33}, 6573-6578 (2000).

\r{11}
F.~Diacu and P.~Holmes,  Celestial Encounters---The Origins of
Chaos and Stability, (Princeton Univ.\ Press, Princeton, NJ, 1996).

\r{12}
F.~Diacu, V.~Mioc and C.~Stoica, ``Phase-space structure and regularization of Manev-type problems",  Nonlinear Analysis, {\bf 41}, 1029-1055 (2000).

\r{13}
F.~Diacu, E.~P\'erez-Chavela and M.~Santoprete, ``Saari's conjecture for the collinear $N$-body problem",  Trans.\ Amer.\ Math.\ Soc. (to appear).

\r{14}
F.~Diacu and M.~Santoprete, ``Nonintegrability and chaos in the
anisotropic Manev problem",  Physica D {\bf 156}, 39-52 (2001).

\r{15}
F.~Diacu and M.~Santoprete, ``On the global dynamics of the anisotropic
Manev problem",  Physica D (to appear).

\r{16}
F.~Diacu and D.~Selaru, ``Chaos in the Gyld\'en problem",
 J.\ Math.\ Phys. {\bf 39}, 6537-6546 (1998).

\r{17}
M.~Gutzwiller, ``The anisotropic Kepler problem in two dimensions",
J.\ Math.\ Phys. {\bf 14}, 139-152.

\r{18}
M.~Gutzwiller, ``Bernoulli sequences and trajectories in the
anisotropic Kepler problem",  J.\ Math.\ Phys. {\bf 18}, 806-823 (1977).

\r{19}
L.D.~Landau and E.M.~Lifshitz,  Mechanics, (Pergamon, Oxford, 1969).

\r{20}
R.~McGehee, ``Triple collision in the collinear three-body
problem", {\it Inventiones Math.} {\bf 27}, 191-227 (1974).

\r{21}
E.~P\'erez-Chavela and L.V.~Vela-Ar\'evalo, ``Triple collision
in the quasihomogeneous collinear three-body problem",
{\it J.\ Diff. Eqn.} {\bf 148}, 186-211 (1998).

\r{22}
M.~Santoprete, ``Symmetric periodic solutions of the anisotropic Manev problem",
 J.\ Math.\  Phys. {\bf 43}, 3207-3219 (2002).

\end{document}